\documentclass[sigconf,authorversion]{acmart}

\usepackage{amsthm}
\usepackage{mathtools}
\usepackage{multicol}
\usepackage{xcolor}
\usepackage{subfig}
\usepackage{csquotes}
\usepackage{bbm}
\usepackage{pifont}
\usepackage{multirow}
\usepackage{makecell}
\usepackage{balance}
\usepackage{microtype}
\usepackage{trimclip}
\usepackage[ruled, vlined,linesnumbered]{algorithm2e}

\usepackage{tikz}
\usetikzlibrary{arrows,decorations.pathmorphing,backgrounds,positioning,fit,petri,calc,positioning}
\usepackage{xifthen}
\usepackage{relsize}

\definecolor{light blue}{HTML}{ADD8E6}
\definecolor{darker blue}{HTML}{9BC2CF}
\definecolor{ciano}{HTML}{00FFFF}
\definecolor{verde}{HTML}{007a00 }
\definecolor{arancio}{HTML}{ffa31a}
\definecolor{violetto}{HTML}{8080ff}
\definecolor{white}{HTML}{FFFFFF}
\definecolor{rossiccio}{HTML}{d64161}
\definecolor{ForestGreen}{HTML}{32CD32}

\newtheorem{problem}{Problem}
\DeclareMathOperator*{\argmax}{arg\,max}
\DeclareMathOperator*{\argmin}{arg\,min}

\newcommand{\vmark}{\ding{51}\xspace}
\newcommand{\xmark}{\ding{55}\xspace}
\newcommand{\algex}{\texttt{Exact}\xspace}
\newcommand{\algkapp}{\texttt{Greedy}\xspace}
\newcommand{\algbatch}{\texttt{Batch}\xspace}
\newcommand{\algrand}{\texttt{Prob\-Peel}\xspace}
\newcommand{\alghybrid}{\texttt{Hybrid\-Peel}\xspace}

\newcommand{\bigO}{\ensuremath{\mathcal{O}}\xspace}
\newcommand{\NP}{\ensuremath{\mathbf{NP}}\xspace}
\newcommand{\NPhard}{{\NP}{-hard}\xspace}
\newcommand{\ourproblem}{{TMDS}\xspace}
\newcommand{\OPT}{\ensuremath{\mathrm{OPT}}\xspace}
\newcommand{\fobj}{\ensuremath{\rho}\xspace}

\newcommand{\weightfunction}{\ensuremath{\tau}}

\newcommand{\temporalDegreeWeight}[2]{\ensuremath{{d}_{#1}^{\weightfunction}(#2)}\xspace}
\newcommand{\temporalDegreeWeightNetwork}[2]{\ensuremath{{d}_{#1}^{\weightfunction}(#2)}\xspace}
\newcommand{\estimateTemporalDegreeWeight}[2]{\ensuremath{\hat{d}_{#1}^{\weightfunction}(#2)}\xspace}
\newcommand{\estimateTemporalDegreeWeightPrime}[3]{\ensuremath{\hat{d}_{#1}^{\weightfunction}(#2, #3)}\xspace}

\newif\ifextended
\extendedtrue

\newcommand{\suitename}{\textsf{ALDENTE}\xspace}

\newcommand{\suiteex}{\texttt{Exact}\xspace}
\newcommand{\suitegreedy}{\texttt{Greedy}\xspace}
\newcommand{\suitebatch}{\texttt{Batch}\xspace}
\newcommand{\suiterand}{\texttt{ProbPeel}\xspace}
\newcommand{\suitehybrid}{\texttt{HybridPeel}\xspace}

\newcommand{\spara}[1]{\smallskip\noindent{\bf #1}\xspace}
\newcommand{\para}[1]{\noindent{\bf #1}\xspace}

\AtBeginDocument{%
  }


\copyrightyear{2024}
\acmYear{2024}
\setcopyright{rightsretained}
\acmConference[KDD '24]{Proceedings of the 30th ACM SIGKDD Conference on Knowledge Discovery and Data Mining}{August 25--29, 2024}{Barcelona, Spain}
\acmBooktitle{Proceedings of the 30th ACM SIGKDD Conference on Knowledge Discovery and Data Mining (KDD '24), August 25--29, 2024, Barcelona, Spain}\acmDOI{10.1145/3637528.3671889}
\acmISBN{979-8-4007-0490-1/24/08}





\settopmatter{printacmref=true}
\begin{document}

\title{Scalable Temporal Motif Densest Subnetwork Discovery}

\author{Ilie Sarpe}
\orcid{0009-0007-5894-0774}
\affiliation{%
	\institution{KTH Royal Institute of Technology}
	\city{Stockholm}
	\country{Sweden}
}
\email{ilsarpe@kth.se}

\author{Fabio Vandin}
\orcid{0000-0003-2244-2320}
\affiliation{%
	\institution{University of Padova}
	\city{Padova}
	\country{Italy}
}
\email{fabio.vandin@unipd.it}

\author{Aristides Gionis}
\orcid{0000-0002-5211-112X}
\affiliation{%
	\institution{KTH Royal Institute of Technology}
	\city{Stockholm}
	\country{Sweden}
}
\email{argioni@kth.se}

\begin{abstract}
	Finding dense subnetworks, with density based on edges or more complex structures, 
	such as subgraphs or $k$-cliques,
	is a fundamental algorithmic problem with many applications.
	While the problem has been studied extensively in static networks,
	much remains to be explored for \emph{temporal networks}.
	
	In this work we introduce the novel problem of identifying the \emph{temporal motif densest subnetwork}, 
	i.e., the densest subnetwork with respect to \emph{temporal motifs}, 
	which are high-order patterns characterizing temporal networks.
	Identifying temporal motifs is an extremely challenging task, 
	and thus, efficient methods are required.
	To address this challenge, we design two novel randomized approximation algorithms 
	with rigorous probabilistic guarantees
	that provide high-quality solutions. 
	We perform extensive experiments showing that our methods outperform baselines. 
	Furthermore, our algorithms scale on networks with up to \emph{billions} of temporal edges, 
	while baselines cannot handle such large networks.
	We use our techniques to analyze a financial network
	and show that our formulation reveals important network structures, 
	such as bursty temporal events and communities of users with similar~interests.
\end{abstract}

\begin{CCSXML}
	<ccs2012>
	<concept>
	<concept_id>10003752.10003809.10003635</concept_id>
	<concept_desc>Theory of computation~Graph algorithms analysis</concept_desc>
	<concept_significance>500</concept_significance>
	</concept>
	<concept>
	<concept_id>10002950.10003648.10003671</concept_id>
	<concept_desc>Mathematics of computing~Probabilistic algorithms</concept_desc>
	<concept_significance>500</concept_significance>
	</concept>
	</ccs2012>
\end{CCSXML}

\ccsdesc[500]{Theory of computation~Graph algorithms analysis}
\ccsdesc[500]{Mathematics of computing~Probabilistic algorithms}


\keywords{temporal motifs; temporal networks; randomized algorithms}


\maketitle

\section{Introduction}

\begin{figure}[t]
	\centering

	}
\end{table*}

\section{Temporal information impact and  baselines}\label{sec:temporalAndBaselines}

 A natural question is if
the temporal information in the temporal network $T$ is necessary, 
or if the directed static network \emph{associated to $T$} already captures the \ourproblem problem formulation.
Given a temporal network $T=(V_T,E_T)$ 
the \emph{associated static network} of $T$ is $G_{T}= (V_T,E_G)$, 
where $E_G=\{\{u,v\} \mid \text{there exists } (u,v,t) \in E_T \text{ or } (v,u,t) \in E_T\}$. 
If we keep the edge directions, a static directed network is denoted by $D_{T}$.\footnote{The static network of a temporal network $T$ simply ignores the timestamps of $T$, collapsing multiple temporal edges on the same static edge.}
For a subset of vertices $W\subseteq V$ we denote by $G_T[W]$ its associated static network.
 
We first show that even for a very simple temporal motif,
the optimal solution on $D_{T}$ can be arbitrarily bad 
when evaluated on the temporal network $T$.
This highlights that Problem~\ref{probl:denseMotif} 
\emph{cannot} be addressed by existing algorithms for $k$-MDS on static networks or aggregations of the input temporal network (i.e., disregarding temporal information).
We then show how to embed temporal information in a weighted set of subgraphs to solve the~\ourproblem problem, to obtain our baselines. 
Computing such a set requires \emph{full enumeration} of temporal motif instances on $T$, which we recall to be extremely demanding, especially on large temporal networks.

\vspace{1mm}
\para{Temporal vs.\ static.}
We start by some definitions. 
For a temporal motif $M=({K},\sigma)$, we use $G[M]$ to denote the
undirected graph associated to ${K}$, i.e.,
$G[M]=(V_{{K}}, \{\{u,v\} : \text{there exists } \allowbreak (u,v)\allowbreak 
\text{or } (v,u) \text{ in } E_{K}\})$, ignoring directions and multiple edges in $K$.
We say that
a temporal motif $M$ is a \emph{2-path} if  $M=\langle(u,v),(v,w)\rangle$, 
with $u\neq v\neq w$.
Given a directed static graph $D_T=(V_T, E_D)$ we define the number of \emph{static 2-paths} 
(i.e., directed paths of length 2) induced by a subset of vertices $H\subseteq V_T$ as $|P_2(H)|$. 
Therefore, a static version of the~\ourproblem problem on temporal networks with $M$ being a 2-path, is to consider the aggregated network of $T$ (i.e., $D_T$) with the goal of identifying $H^*\subseteq V_T$ maximizing $\fobj_2(H) \mathbin=  \frac{|P _2(H)|}{|H|}$.
We refer to this problem as S2DS (Static 2-path Densest Subnetwork).

We can show that an optimal solution to S2DS, 
computed on~$D_T$, the static network  associated to~$T$, 
can be arbitrarily bad when evaluated for~\ourproblem on~$T$ with $M$ being a 2-path.
Without loss of generality, we assume the constant weighting function $\tau_{c}$.

\begin{lemma}\label{lemma:staticvstemporal}
	Given a temporal network $T=(V_T,E_T)$, and its associated directed network $D_{T}$, let $H^*$ be a solution to S2DS on~$D_{T}$. 
	Let $W^*$ on~$T$ be the optimal solution of~\ourproblem, for a 2-path motif, with fixed $\delta\ge 1$.
	Then there exists a temporal network $T$ such that $\fobj(H^*)=0$ while $\fobj(W^*) = \bigO(n^2)$.
	Furthermore, the two solution sets~$H^*$ and $W^*$ are~disjoint.
\end{lemma}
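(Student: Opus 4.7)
My approach is to exhibit an explicit construction that decomposes the vertex set into two gadgets with no cross-edges: one gadget will be statically dense yet temporally empty, the other temporally dense but statically trivial. Concretely, I would partition $V_T = A \cup B$ with $|A|=n-3$ and $|B|=3$, and add no edges across the partition.

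For the static gadget $A=\{a_1,\dots,a_{n-3}\}$ I would use an ``in-then-out'' directed star centered at $a_1$: for each leaf $a_i$ ($i\ge 2$) add $a_i\to a_1$ at time $i$, and $a_1\to a_j$ at time $M+j$ with $M>\delta+n$. In $D_T$ these generate $(n-4)(n-5)$ directed 2-paths $a_i\to a_1\to a_j$ through $a_1$, so the static density $\fobj_2(A)$ is linear in $n$. By construction, every such 2-path has its two edges separated by more than $\delta$, so $A$ carries no temporal 2-path instance at all. For the temporal gadget $B=\{b_1,b_2,b_3\}$ I would replicate the pattern across $n^2$ disjoint time windows: for $k=1,\dots,n^2$ insert $b_1\to b_2$ at time $T_k$ and $b_2\to b_3$ at time $T_k+1$, choosing $T_{k+1}-T_k>\delta$ and $T_1$ larger than every timestamp used in $A$. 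This yields exactly $n^2$ temporal $2$-path instances on three vertices, so $\fobj(B)=n^2/3$; in $D_T$, by contrast, parallel copies collapse and $B$ contributes only a single static 2-path.

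The remaining step is to argue that the two optimisers sit inside the ``correct'' gadget. Because no edge crosses the partition, for every $H=A'\cup B'\subseteq V_T$ both $|P_2(H)|$ and $\tau(H)$ split additively as $|P_2(A')|+|P_2(B')|$ and $\tau(A')+\tau(B')$. Every density is thus a convex combination of the two gadget densities, so no strictly mixed subset can beat the better of the two pure sides. A short monotonicity check on the in-then-out star shows that $\fobj_2$ is maximised by taking all of $A$, giving $H^*=A$ with density linear in $n$ (dominating the $O(1)$ static density of $B$), while the only subset of $B$ carrying a temporal instance is $B$ itself, giving $W^*=B$ with $\fobj(W^*)=n^2/3=\bigO(n^2)$. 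Disjointness $A\cap B=\emptyset$ and the two claimed quantitative gaps $\fobj(H^*)=0$ and $\fobj(W^*)=\bigO(n^2)$ then follow immediately. The main subtlety is the convex-combination/no-mixed-optimum step; beyond that the argument is elementary bookkeeping on stars and parallel edges.
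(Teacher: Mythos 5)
Your construction is correct, but it differs from the one in the paper. The paper builds two disjoint tripartite ``layered'' gadgets, $T_1$ on $3n/4$ vertices and $T_2$ on $n/4$ vertices, each consisting of three layers with complete bipartite connections between consecutive layers; in $T_1$ the first-layer edges carry timestamp $2$ and the second-layer edges timestamp $1$ (so every static 2-path exists but none is time-respecting), while in $T_2$ the timestamps are $1$ then $2$. Both gadgets then have $\Theta(n^3)$ static 2-paths, the larger one wins S2DS ($n^2/48$ vs.\ $n^2/432$), yet contributes zero to $\fobj$, and $V_2$ attains $\fobj(W^*)=\bigO(n^2)$. You instead obtain the static side from a star whose in-edges and out-edges are separated by a time gap exceeding $\delta$ (rather than by inverting the timestamp order), and the temporal side from only three vertices carrying $n^2$ time-separated parallel repetitions of the 2-path, so your $\Theta(n^2)$ density comes from temporal edge multiplicity rather than from a combinatorial count over $\Theta(n)$ vertices. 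Both routes are valid; your version makes the ``no temporal instance'' claim immediate and, usefully, makes explicit the additive-decomposition/convex-combination step ruling out mixed optima, which the paper's proof leaves implicit when it compares only the two pure gadgets. The paper's version, on the other hand, demonstrates that the separation already arises from timestamp \emph{ordering} alone, without relying on many parallel temporal edges between the same vertex pairs, which arguably makes the counterexample more robust as a statement about why aggregation to $D_T$ loses information. Do make sure, if you keep your write-up, to note that the $2$-path motif requires three distinct vertices, so the back-and-forth sequences $a_1\to a_j\to a_1$ in your star contribute neither static nor temporal 2-paths.
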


\vspace{1mm}
\para{Embedding temporal information.}
Since we cannot simply disregard temporal information, 
we investigate how to compute a suitable weighted set of static subgraphs embedding temporal information. 
Such a set can be used to solve the~\ourproblem problem by leveraging existing techniques.
Given the input to \ourproblem, we define $\mathcal{H} \mathbin=  \{H :H=(V_H, E_H) \subseteq G_{T}, \text{ for some } H' \subseteq H \text{ it holds } H' \!\sim\! G[M],~\tau(V_H)>0, \text{ and } |V_H|=k\}$  be the set of 
$k$-connected induced subgraphs ($k$-CISs) from $G_{T}$,%
\footnote{Symbol ``$\sim$'' denotes standard graph isomorphism.} 
where each $k$-CIS, $H\in \mathcal{H}$, 
encodes a subgraph containing at least a $\delta$-instance $S\in \mathcal{S}_T$ with $\tau(S)>0$,
and $H$ is induced in~$G_{T}$,
i.e., has all the edges among the vertices~$V_H$. 
The above conditions are ensured by requiring the existence of $H'\subseteq H$ 
such that $H'\mathbin\sim\mathbin G[M]$ and $\tau(V_H)>0$, 
where $G[M]$ is the undirected graph associated to the temporal motif~$M$. 
For example, considering the motif $M$, and the temporal network $T$ from Fig.~\ref{fig:basicdef}, and $\delta=10$, 
then the corresponding set $\mathcal{H}$ is $\mathcal{H} = \{(H_1 =  G_T[\{v_1,v_2,v_3\}],~H_2 =  G_T[\{v_2,v_3,v_4\}],~H_3 =  G_T[\{v_1,v_3,v_5\}],~H_4 =  G_T[\{v_3,v_4,v_5\}]  \} $.

Next, we need to define the weight of each subgraph $H=(V_H,E_H)$ in  $\mathcal{H}$. 
This is done by $\tau(V_H) = \sum_{S\in \mathcal{S}_{V_H}} \tau(S)$, 
i.e., the weight of the subgraph $H$ is the sum of the weights of the $\delta$-instances 
that occur among the nodes of $V_H$ in $T$.
As an example, consider $H_1 = (V_{H_1}=\{v_1,v_2,v_3\}, \{ \{v_1,v_2\},\{v_2,v_3\},$ $ \{v_3,v_1\} \}) \in \mathcal{H}$ from Fig.~\ref{fig:basicdef}, then $\tau(V_{H_1}) = 2$ under weight $\tau_c$, and $\delta=10$, as there are two $\delta$-instances among such nodes.
Note that such a construction is in accordance with Lemma~\ref{lemma:staticvstemporal}, 
as to build, and weight the set $\mathcal{H}$ we are exploiting full information about the temporal motif $\delta$-instances of $M$ in~$T$. 
In Section~\ref{appsubsec:baselineOverview} we provide a summary on how to leverage the set $\mathcal{H}$ to adapt existing techniques for high-order subgraph discovery to solve \ourproblem, 
\ifextended
while a more detailed description can be found in Appendix~\ref{appsec:baselines}.
\else
while a detailed description can be found in our extended version~\cite{SarpeTMDS2024}.
\fi
Such algorithms will be used as baselines for comparison against \suitename in our experimental evaluation. 
All resulting baselines, and the algorithms in \suitename that we developed are finally summarized in Table~\ref{tab:ALDENTEalgs}.
	
	

\section{Experimental evaluation}\label{sec:experiments}

In this section we evaluate the algorithms in \suitename against the baselines. We  describe the experimental setup in Section~\ref{subsec:setup}, 
and we compare the solution quality, and runtime of all the algorithms in Section~\ref{subsec:appxration}. 
Finally, in Section~\ref{subsec:casestudy} we conduct a case study 
on a real-world transaction network from the Venmo platform to support the usefulness of solving the \ourproblem problem. 
\ifextended
We defer to appendices for additional results, e.g., on parameter sensitivity (in Section~\ref{subsec:sensitivity}), memory usage (in Section~\ref{subsubsec:memory}) and results under decaying weighting function $\tau_{d}$ (in Section~\ref{appsubsec:expdecay}). 
\else
In our extended version~\cite{SarpeTMDS2024} we report results on parameter sensitivity, memory usage, and results under decaying weighting function $\tau_{d}$. 
\fi

\subsection{Baseline overview}\label{appsubsec:baselineOverview}

We give a brief, description of our baselines, that leverage the construction of the set $\mathcal{H}$ (as from Section~\ref{sec:temporalAndBaselines}) and that are based on known ideas in the field, 
\ifextended
see and Appendix~\ref{appsec:baselines} 
\else
see~\cite{SarpeTMDS2024} 
\fi
for more details.

\suiteex: Exact algorithm embedding the set $\mathcal{H}$ is a properly weighted flow network, adapting ideas from~\cite{Goldberg:CSD-84-171,Mitzenmacher2015,Fang2019}. It computes multiple min-cut solutions on the flow network, the algorithm identifies an optimal solution to the~\ourproblem problem.

\suitegreedy: A $1/k$-approximation greedy algorithm that extend the ideas in~\cite{Charikar2000,Fang2019}. The algorithm performs $\bigO(n-k)$ iterations where at each iteration a vertex with minimum temporal motif degree is removed and the set $\mathcal{H}$ is updated accordingly. The algorithm returns the vertex set with maximum density observed.

\suitebatch: A $1/(k(1+\xi)), \xi>0$ greedy approximation algorithm extending ideas from~\cite{Bahmani2012,Epasto2015,Tsourakakis2015}. At each step the algorithm removes vertices with small temporal motif degree in ``batches'' and updates the set $\mathcal{H}$ returning the maximum density vertex set observed, the overall number of iterations is bounded by $\bigO(\log n/\xi)$.

All such baselines require the computation of the set $\mathcal{H}$, and therefore $\mathcal{S}_T$.\footnote{This holds for most of other existing techniques even based on sampling, see Section~\ref{sec:relworks}.} Unfortunately, this is extremely inefficient and does not scale on large data, as we will show next.

\begin{table}[t]
	\centering
	\caption{Networks used in our experiments. $n$: vertices, $m$: temporal edges, $|E_G|$: edges in the undirected static network, time-interval length $\delta$: value used for counting $\delta$-instances.}
	\label{tab:datasets}
	\scalebox{0.7}{
		\begin{tabular}{clrrrrrr}
			\toprule
			&Network& $n$&$m$& $|E_G|$ & Precision & Timespan & $\delta$\\
			\midrule
			\multirow{4}{*}{\rotatebox[origin=c]{90}{Medium}} & Sms & 44\,K & 545\,K & 52\,K & sec & 338 (days) &172.8\,K\\
			& Facebook & 45.8\,K & 856\,K & 183\,K & sec& 1561 (days)&86.4\,K\\
			& Askubuntu & 157\,K & 727\,K & 455\,K &  sec & 2614 (days)&172.8\,K\\
			& Wikitalk & 1\,100\,K & 6\,100\,K &  2\,800\,K & sec & 2277 (days)&43.2\,K\\
			\midrule
			\multirow{4}{*}{\rotatebox[origin=c]{90}{Large}}&Stackoverflow & 2.6\,M & 47.9\,M & 28.1\,M  & sec & 2774 (days)&172.8\,K\\
			&Bitcoin & 48.1\,M & 113\,M & 84.3\,M  & sec & 2585 (days)&7.2\,K\\
			&Reddit & 8.4\,M & 636\,M & 435.3\,M  & sec &3687 (days)&14.4\,K\\
			&EquinixChicago & 11.2\,M & 2\,300\,M & 66.8\,M  & $\mu$-sec &62.0 (mins)&50\,K\\
			\midrule
			&Venmo & 19.1\,K & 131\,K & 18.5\,K & sec & 2091 (days)&-\\
			\bottomrule
		\end{tabular}
	}
\end{table}

\begin{figure}[t]
\centering
\includegraphics[width=0.9\linewidth]{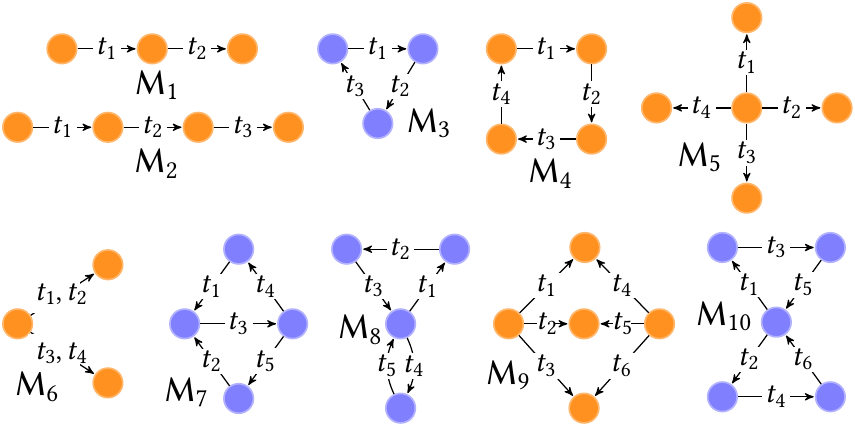} 
\caption{Temporal motifs ($\mathsf{M}_i,i\in[10]$) used in the experimental evaluation. Motifs with blue vertices are not used on EquinixChicago since this network is bipartite. For each motif $t_i, i=1,\ldots$ denotes the ordering of its edges in $\sigma$.}
\label{fig:gridMotifs}
\end{figure}

\begin{figure*}[t]
	\centering
	\captionsetup[subfigure]{labelformat=empty}
	\begin{tabular}{l}
		\subfloat{\includegraphics[width=0.93\textwidth]{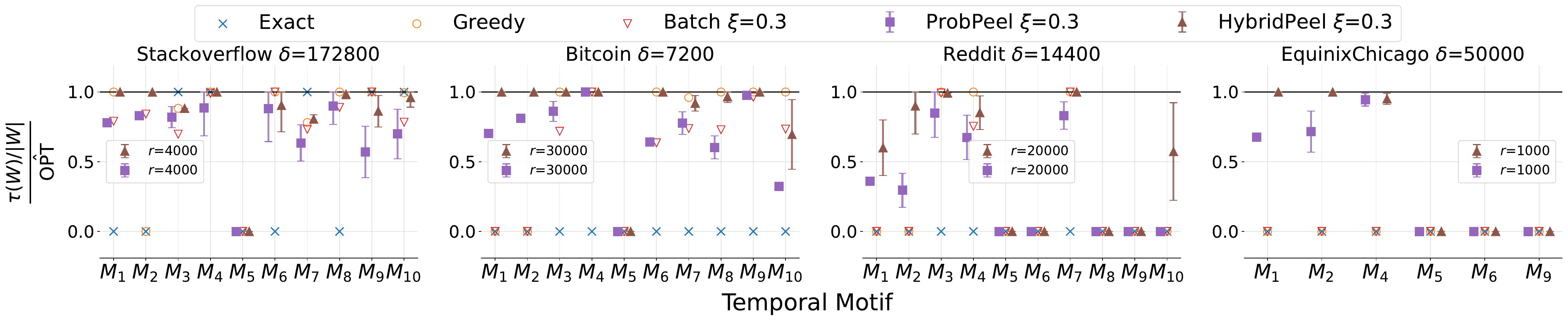}}\\[-4ex]
		\subfloat{\includegraphics[width=0.93\textwidth]{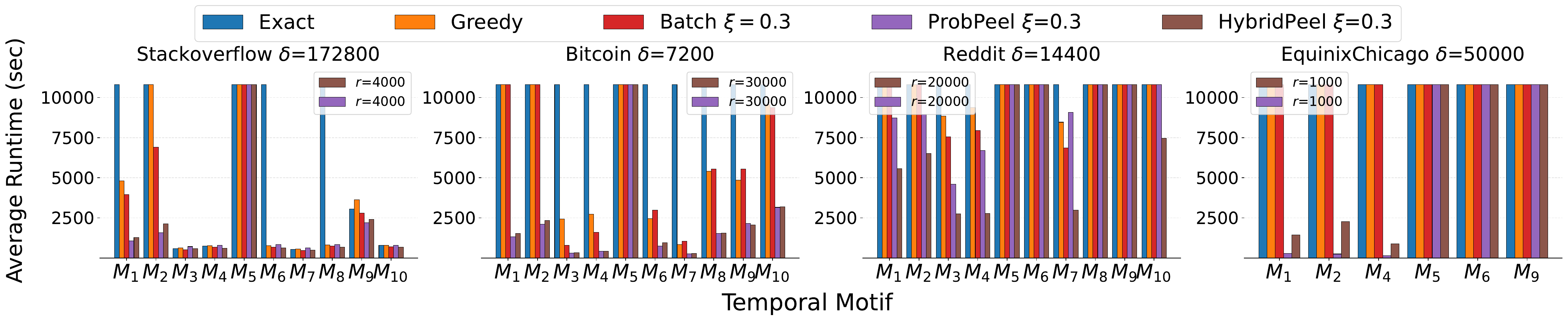}}\\[-4ex]
	\end{tabular}
	\caption{For each configuration we report (top): the quality of the solution compared to the best empirical solution (i.e., $\hat{\OPT}$), and (bottom): the average running times to achieve such solution. } 
	\label{fig:testApproxandRT}
\end{figure*}

\subsection{Setup}\label{subsec:setup}
\para{Implementation details and hardware.} All the algorithms are implemented in C++20 and compiled under Ubuntu 20.04 
with gcc 9.4.0 with optimization flags. 
The experiments are executed sequentially on a 72-core Intel Xeon Gold 5520 @2.2\,GHz machine 
with 1008\,GB of RAM memory available. 
For \suiteex, we use the algorithm by~\citet{Boykov2004} 
to compute the $(s,z)$-min cut on the flow network. 
For the implementation of the min-heap 
we use the data structures provided by the \texttt{Boost} library.\footnote{\url{https://www.boost.org/}} 
We use the algorithm of~\citet{Mackey2018} for exact enumeration of temporal motif $\delta$-instances.
We use \texttt{PRESTO-A}~\cite{Sarpe2021}, 
\ifextended
with parameter $q=1.25$, 
\else
\fi
as sampling algorithm in \suiterand and \suitehybrid. 
At each iteration of \suiterand (and \suitehybrid) we set $r_i=r$, 
(i.e., a fixed value for all iterations) and we set this parameter accordingly for each dataset;
we discuss the sensitivity of the solution to parameter $r$ 
\ifextended
in Section~\ref{subsec:sensitivity}. 
\else
in our extended version~\cite{SarpeTMDS2024}. 
\fi
We consider as weighting function~$\tau$  the constant function $\tau_{c}$. Our code is available online.%
\footnote{\url{https://github.com/iliesarpe/ALDENTE}.}

\vspace{1mm}
\para{Datasets and time-interval length.}
The datasets considered in this work span medium and large sizes 
and are reported in Table~\ref{tab:datasets}. For each dataset we set $\delta$ to be some multiple of the respecting time unit (e.g., for datasets with precision in seconds setting $\delta=7.2$K corresponds to two hours). We also select a value of $\delta$ that is consistent with previous studies and application scenarios~\cite{Paranjape2016, Liu2023a,Pashanasangi2021, LiuFraud2024}, and large enough to be computationally challenging. 
\ifextended
See more details on datasets in Appendix~\ref{appsec:implDetandData}.
\else
More details on each dataset are reported in~\cite{SarpeTMDS2024}.
\fi

\vspace{1mm}
\para{Temporal motifs.} The temporal motifs we use are good representative of a general input to \ourproblem in most applications, and 
are reported in Fig.~\ref{fig:gridMotifs}.
They represent different topologies, e.g., triangles, squares and more complicated patterns, spanning different values of $k$ and $\ell$, and the largest temporal motifs correspond to particularly challenging inputs. 

\subsection{Solution quality and runtime}
\label{subsec:appxration}

We now compare the runtime and quality of the solution reported by the algorithms in \suitename (\algrand, and \alghybrid) against the baselines (\algex, \algkapp, and \algbatch). 

\vspace{1mm}
\para{Setup and metrics.} For each configuration (dataset, motif, value of $\delta$) we run each algorithm five times with a time limit of three hours and maximum RAM of 150\,GB on all datasets but EquinixChicago, on which we set the memory limit to 200\,GB. 
For each algorithm we compute the average running time over the five runs. 
To assess the quality of the solution of the randomized algorithms, we compute the actual value of the solution in the original temporal network on the returned vertex set. 
For each configuration over all the algorithms that terminate we compute $\hat{\OPT}$,%
\footnote{Such value is guaranteed to be the actual optimum \OPT only when~\algex\ terminates.} 
i.e., the best empirical solution obtained across all algorithms, 
and we use this value as reference for comparison of the different algorithms. 
For all the deterministic algorithms we show the approximation factor as 
$\frac{\tau(W)/|W|}{\hat{\OPT}}$, 
where $W$ is the solution returned by a given algorithm; 
for~\algrand\ and \alghybrid we show the average approximation factor 
over the five runs, and we also report the standard deviation. 

We set $\xi=0.3$ for \algbatch over all experiments we performed, as it provides sufficiently good solutions in most cases
\ifextended
(see Section~\ref{subsec:sensitivity} for a detailed empirical evaluation of such parameter), 
\else
(see App.~\ref{subsubsec:varyingxi} for a detailed empirical evaluation of such parameter),
\fi
and use the same values of $\xi$ for~\algrand and~\alghybrid.  For~\alghybrid we set $J=3$ on large datasets, and $J=2$ otherwise.
When an algorithm {does not terminate} within the time limit 
we set its time to three hours, and its approximation factor to 0.
Since our focus is on scalability we place particular emphasis on \emph{large} datasets, so we defer results on medium-size datasets to 
\ifextended
Appendix~\ref{appsubsec:mediumdata} 
\else
our extended version~\cite{SarpeTMDS2024} 
\fi
as they follow similar trends to the ones we will discuss. We also discuss results concerning memory usage in App.~\ref{subsubsec:memory}.

\vspace{1mm}
\para{Results.} The results are reported in Fig.~\ref{fig:testApproxandRT}. 
Concerning the runtime, we observe different trends on the various datasets. First on Stackoverflow, most of the motifs require a small runtime, of few hundreds of seconds, to be counted (even by \algex), and on such motifs most algorithms achieve a comparable runtime. While on some hard motifs, our proposed randomized algorithms achieve a significant speed-up of more than $\times 4$ and up to $\times 9$ over the baselines (e.g., on motifs $\mathsf{M}_1$ and $\mathsf{M}_2$). On Bitcoin we observe that the runtime for counting most motifs is prohibitive, and \algex is not able to complete the execution on any motif on such dataset. Remarkably, our proposed randomized algorithms achieve a consistent speed-up of at least $\times 3$ up to $\times 8$ over \algkapp and \algbatch, and more importantly our algorithms are able to complete their execution even when the baselines do not scale their computation (e.g., motifs $\mathsf{M}_1$ and $\mathsf{M}_2$). The scalability aspect becomes more clear on the biggest datasets that we considered, in fact on Reddit, despite of being consistently more efficient than the baselines, our \alghybrid is the only one that is able to complete its execution on $\mathsf{M}_{10}$ over all techniques considered. Finally, on EquinixChicago (with more than two billions of temporal edges), our randomized algorithms (\algrand and \alghybrid) are the only ones that terminate their execution in a small amount of time. We observe that \algrand is significantly more efficient than \alghybrid, at the expense of a slightly less accurate solution, which we will discuss next. On some of the configurations all algorithms do not terminate, this is because some of the motifs are extremely challenging, therefore the timelimit we set is too strict as a constraint even for randomized algorithms. Overall these experiments suggest that our randomized algorithms successfully enable the discovery of temporal motif densest subnetworks on large temporal networks. This closely matches our theoretical insights, capturing the superior efficiency and scalability of our randomized algorithms against baselines.

Concerning the solution quality we observe similar trends on most datasets. \algbatch provides solutions with smaller density than \algkapp, and \algrand closely matches \algbatch's results (as captured by our analysis). 
\algkapp is the deterministic approximation algorithm that mostly often provides the solution with highest density when it terminates, and our \alghybrid closely matches such results, so our \alghybrid results accurate, scalable and efficient (as it is consistently faster than \algkapp). Interestingly on the EquinixChicago dataset, where only our proposed randomized algorithms are able to terminate in less than three hours (\algrand takes few hundreds of seconds), the difference in the quality of the solutions provided by \algrand and \alghybrid is not significantly large, suggesting that on massive data \algrand may be a good candidate to obtain high-quality solutions in a very short amount of time. 

\vspace{1mm}
\para{Summary.} To summarize, our experiments show that techniques based on existing ideas do not scale on challenging motifs and large datasets for \ourproblem. In contrast, our proposed algorithms in \suitename provide high-quality solutions in a short amount of time and scale their computation on large datasets, enabling the practical discovery of temporal motif dense subnetworks.

\subsection{Case study}
\label{subsec:casestudy}

Recently, \citet{Liu2023} released a dataset containing a small number of transactions from the Venmo money-transfer platform, where each transaction is accompanied by a message. The corresponding temporal network $T=(V_T,E_T)$ is as follows: each temporal edge is a tuple $(u,v,t,\phi,\gamma)\in E_T$ where $(u,v,t)$ is the temporal edge as considered up to now and $\phi, \gamma$ are metadata: $\phi\in\{0,1\}$ denotes if $(u,v)$ are friends in the social network 
and $\gamma$ is a text message. Since the dataset is very small (see Table~\ref{tab:datasets}), we computed \emph{exact solutions} to Problem~\ref{probl:denseMotif}. We investigated the following question.

\indent \textbf{Q}: What insights about the Venmo platform are captured by optimal solutions to the \ourproblem problem according to a temporal motif $M$, and what subnetworks are captured by varying  $\delta\in \mathbb{R}^+$?

To answer \textbf{Q} we select motif $\mathsf{M}_5$ in Fig.~\ref{fig:gridMotifs}, a temporal star with four temporal edges, corresponding to finding groups of users (i.e., the vertex at the center of $\mathsf{M}_5$) sending many transactions to their neighbors (i.e., the vertices with no out-edges in $\mathsf{M}_5$) in a time-scale controlled by $\delta\in \mathbb{R}^+$.  In addition to $\mathsf{M}_5$, we provide as input to \ourproblem the constant weighting function $\tau_c$ and $\delta_1=7\,200$ (i.e., 2-hours), to capture short-time scale patterns. 

The optimal solution $T[W^*_{\delta_1}]$ has 14 vertices, and we report its directed static network and its temporal support in Fig.~\ref{fig:venmoStarSols} (Left). Interestingly, this is a star shaped network, with only the central vertex ($v_1$) exchanging money with all the other vertices (not reciprocated), and vertices are not friends.
The message associated to each transaction is identical for all transactions: \emph{``Sorry! We’re already sold out for tonight! Feel free to join us this even in the regular line and pay cover when you get there. Thanks!''}. Even more interestingly, all events occur really close in time (see Fig.~\ref{fig:venmoStarSols} (Left)). In fact, this corresponds to a  \emph{bursty event} with merchant $v_1$ overbooking for a specific event, which is identified by the combination  of $\mathsf{M}_5$ and small $\delta\in \mathbb{R}^+$. We also observe that such a subnetwork \emph{cannot be captured by the existing formulations} for dense temporal subnetworks (see Section~\ref{sec:relworks}) as both $T[W^*_{\delta_1}]$ and its directed static network have very \emph{small edge-density}, i.e., $13/14< 1$. 

We then consider $\mathsf{M}_5$ but analyze a much larger time-scale, that is we solve \ourproblem with  $\delta_2=172\,800$.
 Under this settings of parameters we expect the optimal solution to contain instances of $\mathsf{M}_5$ with a longer duration (accounting for more historical user activities). The optimal subnetwork has 16 vertices and it is shown again in Fig.~\ref{fig:venmoStarSols} (Right). As expected the temporal support of such subnetwork is significantly long (spanning from Oct.\ 2018 to Feb. 2021). The messages over the transactions of such subnetwork are usually related to food, theatre and social activities,  denoting that users share similar interests. But we also identify transactions that are likely related to sport gambling. Such suspicious transactions report terms such as \emph{``bracket season''} or emojis of basketballs (in fact, $\mathsf{M}_5$ captures such patterns as once a user loses a gamble with its friends, it usually sends the money using a pattern similar to $\mathsf{M}_5$).\footnote{
 \ifextended
 An histogram of the words associated to the transactions is in Fig.~\ref{fig:vemoWordCloud}.
 \else
 Further details are discussed in Appendix~\ref{appsubsec:caseStudy}.
 \fi} 
 Our findings also support the insights by~\citet{Liu2023}, 
who use temporal motifs to identify poker gamblers on Venmo. 

In summary,  by solving the \ourproblem problem for temporal motifs of interest and different values of the time-window $\delta$ we can gain precious insights on the network being analyzed not captured otherwise by previous formulations.


\begin{figure}[t]
\centering
\begin{tabular}{l@{}l}
	\includegraphics[width=0.48\linewidth]{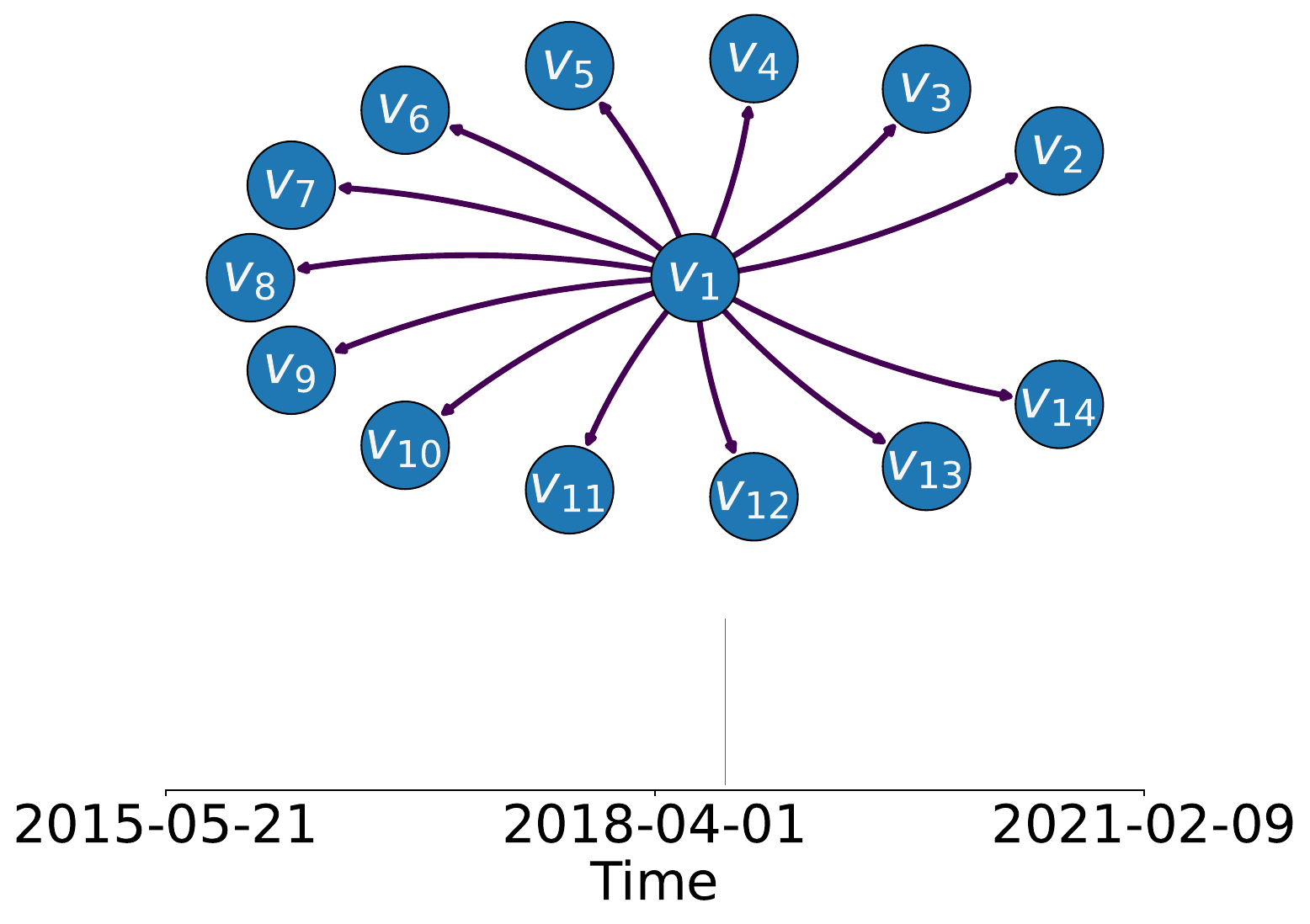} &  \includegraphics[width = 0.45\linewidth]{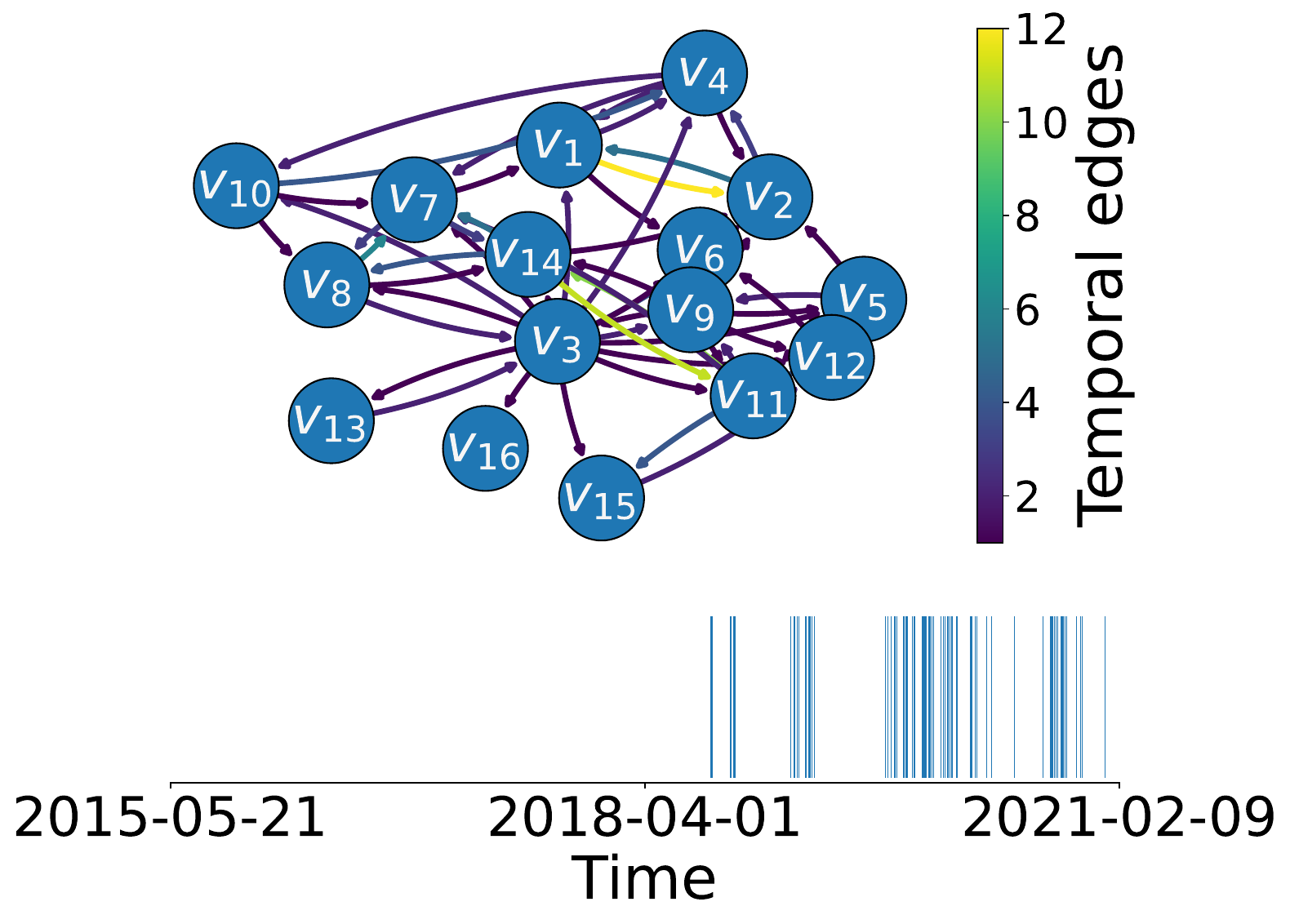}\\[-3ex]
\end{tabular}
\caption{Directed static network of $T[W^*_{\delta}]$ according to different values of $\delta$ on $\mathsf{M}_5$, we color edges according to the number temporal edges that map on each static edge. Below each  static network we report the temporal support of $T[W^*_{\delta}]$, i.e., we place a bar in correspondence of the timings of the events in $T[W_\delta^*]$ over the timespan of observation of the network. (Left): $\delta_1=7\,200$. (Right): $\delta_2=172\,800$.}
\label{fig:venmoStarSols}
\end{figure}

\section{Conclusions}
We introduced a new problem, requiring to identify the temporal motif densest subnetwork (\ourproblem) of a large temporal network. We developed two novel algorithms based on randomized sampling, for which we proved a probabilistic approximation ratio and show experimentally that they are efficient and scalable over large data. The techniques developed in this work may be useful in other problems, such as the $k$-clique problem~\cite{Tsourakakis2015,Mitzenmacher2015,Fang2019} given the availability of many sampling algorithms with tight guarantees for estimating $k$-clique counts~\cite{Jain2017, Bressan2021}. 

There are many possible directions for future work, such as improving the theoretical guarantees offered by our randomized algorithm through motif-dependent approximation ratios, and understanding if randomization can be coupled with recent ideas in the field of densest-subgraph discovery, such as techniques in~\cite{Boob2020, Chekuri2022}.

\begin{acks}
We thank Matteo Ceccarello for helpful comments on an earlier version of the current work. This work is supported, in part, by MUR of Italy, under project PRIN n. 2022TS4Y3N ``EXPAND: scalable algorithms for EXPloratory Analyses of heterogeneous and dynamic Networked Data", and project ``National Centre for HPC, Big Data and Quantum Computing'' (CN00000013).
This work is also supported by 
the ERC Advanced Grant REBOUND (834862), 
the EC H2020 RIA project SoBigData++ (871042), and 
the Wallenberg AI, Autonomous Systems and Software Program (WASP) funded by the Knut and Alice Wallenberg Foundation. 
\end{acks}

\newpage
\ifextended
\bibliographystyle{ACM-Reference-Format}
\bibliography{sample}
\else

\fi

\appendix
\section{Application scenarios for the \ourproblem problem}\label{appsec:TMDSApplications}
While we study the \ourproblem problem in its general form, 
in this section we discuss more in depth two possible applications of our novel problem formulation, 
showing that the \ourproblem is a really versatile and powerful tool for temporal network analysis.
We will discuss how the Temporal Motif Densest Subnetwork (\ourproblem) can be used to discover important insights from 
(i) travel networks and (ii) e-commerce networks capturing online platforms.

(i) A travel network can be modeled as a temporal network $T=(V,E)$ with $V$ being the set of vertices corresponding to Points Of Interest (POIs) 
or particular geographic areas of a city~\cite{Lei2020}, and edges of the form $(u,v,t)\in E$ represent trips by user that travel from point 
$u\in V$ to point $v\in V$ at time $t$.  
A temporal motif $M$ on such network captures therefore travel patterns and their dynamics. 
As an example a temporal motif $M: x\stackrel{t_1}\to y \stackrel{t_2}\to x$ occurring within one day 
often corresponds to a round trip by a user from $x$ being the home location, to $y$ being the work location. 
The \ourproblem problem formulation, in such a scenario can provide unique insights on the POIs appearing frequently together 
in the various travel patterns (isomorphic to $M$) of the various users. 
Furthermore, analyzing different time-scales as captured by the temporal motif duration, 
can yield unique insights about daily vs.\ weekly vs.\ monthly patterns.
 As for our Figure~\ref{fig:basicdef} in the \ourproblem several POIs often coexist, and such information can be used to improve 
 connections between POIs that are not well connected by public transport, 
 or for the design of travel passes for specific areas of a city with specific time-duration, 
 based on POIs that are often visited together in the various trips.
 
(ii) E-commerce online platforms can be modeled as temporal bipartite networks $T=(V,E)$, 
where the set of vertices is partitioned in two layers $V=U \cup P$, 
with $U$ being the set of users (i.e., customers that purchase online products), 
and $P$ the set of products available for purchase. 
A temporal edge $e=(u,p,t)$ on such a network captures 
that user $u \in U$ purchased product $p \in P$ at a given time $t$. 
As an example a temporal motif $M': x\stackrel{t_1}\to y, x \stackrel{t_2}\to z$ occurring within a time-limit
corresponds to a pair of products i.e.,  $y$ and $z$ that user $x$ buys at distance of at most the provided time-limit, note also that $z$ is purchased after $y$. 
A temporal motif $M$ therefore captures specific purchase habits of users within a limited time-limit 
(as again controlled by the duration parameter of the temporal motif). 
The optimal \ourproblem can be used to collect a set of users (and items) that frequently co-occur. 
In particular, the \ourproblem can contain several vertices (i.e., users) with similar purchase sequences or buying habits, 
this enables the design of personalized advertisement for those users in the \ourproblem 
(e.g., by leveraging the history of other users in the \ourproblem), furthermore this can be studied at different time-scales. 
Note also that this is more powerful than only considering products that are frequently co-purchased together 
(e.g., consider $M'$, then the products $y$ and $z$ are purchased in different moments), for which many techniques already exist.

\begin{table}[t]
	\centering
	\caption{Notation table.}
	\label{tab:notationTable}
	\scalebox{0.9}{
		\begin{tabular}{ll}
			\toprule
			Symbol & Description\\
			\midrule
			$T=(V_T,E_T)$ & Temporal network\\
			$T[W], W\subseteq V_T$ & Induced temporal sub-network by $W$\\
			$n,m$ & Number of nodes and temporal edges of $T$\\
			$D_T, G_T$ & Directed and undirected static network of $T$\\
			$M=(K, \sigma)$ & $k$-vertex $\ell$-edge temporal motif\\
			$K=(V_K,E_K)$ & Multi-graph of the temporal motif $M$\\
			$\sigma$ & Ordering of the edges of $E_K$ in a motif\\
			$\delta$ & Time window-length of a temporal motif instance\\
			$\mathcal{S}_{T'}$ & Set of $\delta$-instances of $M$ in the network $T'$\\ 
			$\mathcal{S}_{W}, W\subseteq V_T$ & Set of $\delta$-instances of $M$ in the network $T[W]$\\
			$\tau$ & Weighting function over $\mathcal{S}_T$\\
			$\tau(W), W\subseteq V_T$ & Total weight of the $\delta$-instances in $\mathcal{S}_W$\\
			$\fobj(\cdot)$ & \ourproblem objective function\\
			$G[M]$ & Undirected graph associated to $K$\\ 
			$\mathcal{H}$ &  Set of $k$-CIS over $G_T$\\
			$\temporalDegreeWeight{W}{v}, W\subseteq V_T$ & Temporal motif degree of vertex $v$ in $T[W]$\\
			$\temporalDegreeWeightNetwork{T'}{v}, T'\subseteq T$ & Temporal motif degree of vertex $v$ in $T'$\\
			$\xi>0$ & Threshold for batch peeling\\
			$\varepsilon,\eta$ & Accuracy and confidence parameters\\
			$r$ & Sample size in \algrand, \alghybrid\\
			$\estimateTemporalDegreeWeight{W}{v}$ & Estimate of the temporal motif degree of $v$ in $T[W]$\\
			$\hat{m}$ & \shortstack[r]{Maximum edges from $E_T$ in a window length of $\delta$}\\
			$q$ & Parameter of the subroutine \texttt{PRESTO-A}~\cite{Sarpe2021}\\
			\bottomrule
		\end{tabular}
	}
\end{table}

\section{Notation}\label{appsec:notation}
A summary of the notation used throughout the paper is reported in Table~\ref{tab:notationTable}.

\ifextended
\else
\begin{figure}[t]
	\centering
	\subfloat{\includegraphics[width=1\linewidth]{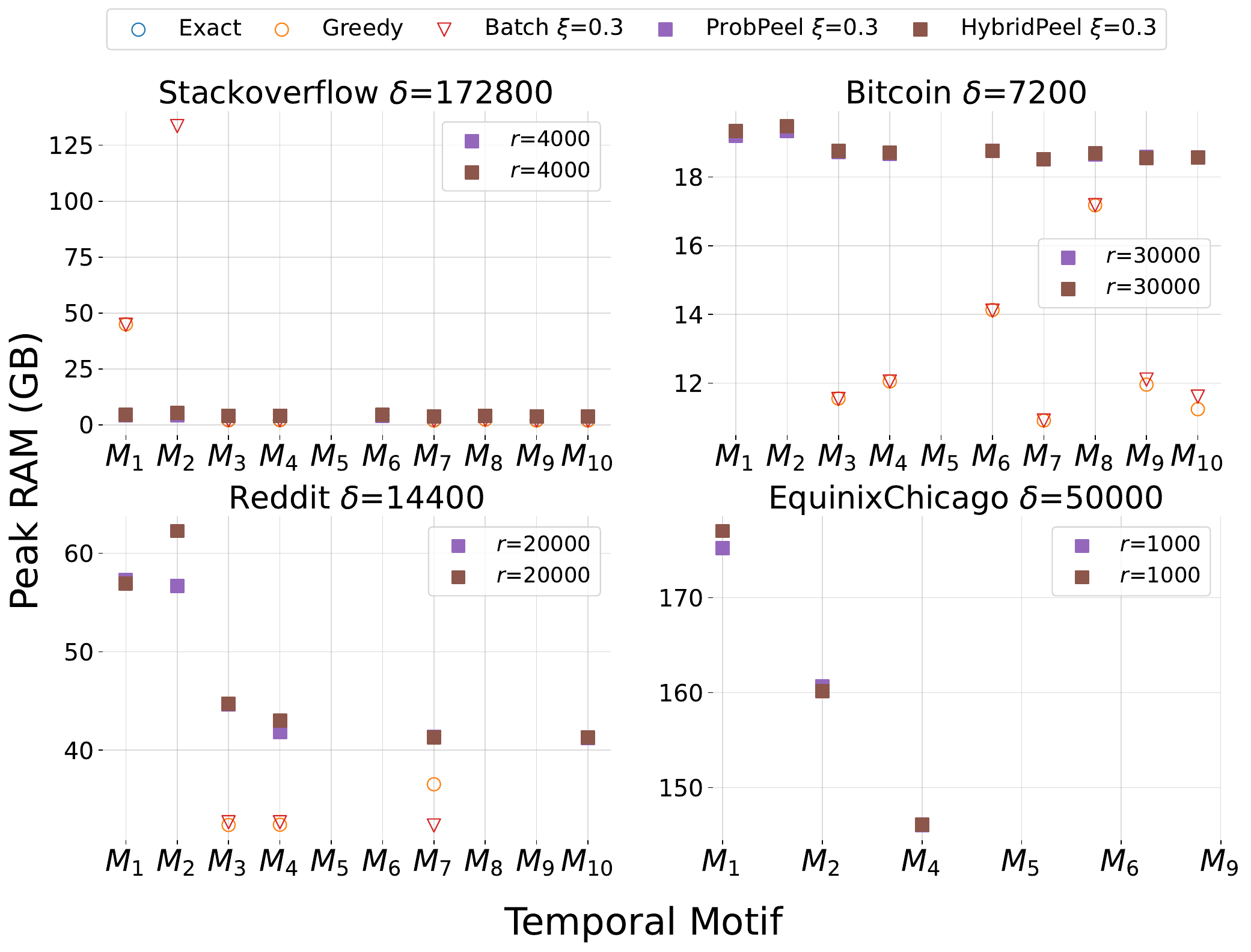}}
	\caption{Peak RAM memory in GB over one execution.}
	\label{fig:memory}
\end{figure}

\begin{figure}[t]
	\centering
	\begin{tabular}{c}
		\includegraphics[width=0.95\linewidth]{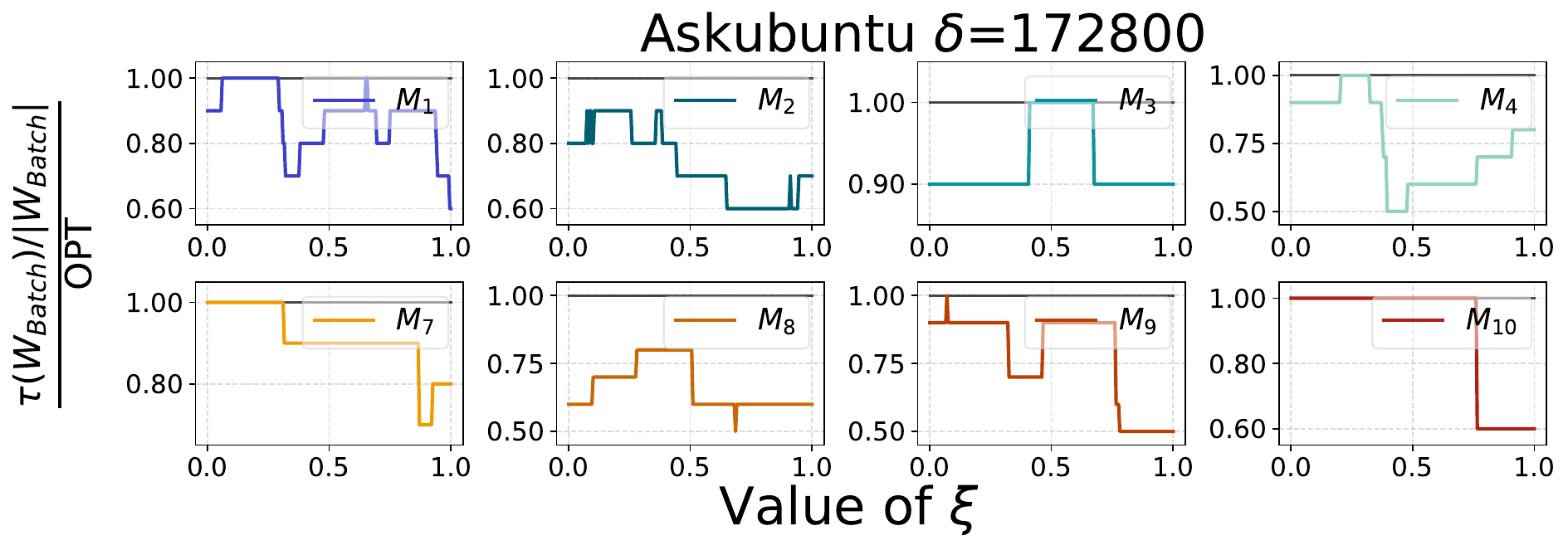}\\
		\includegraphics[width = 0.95\linewidth]{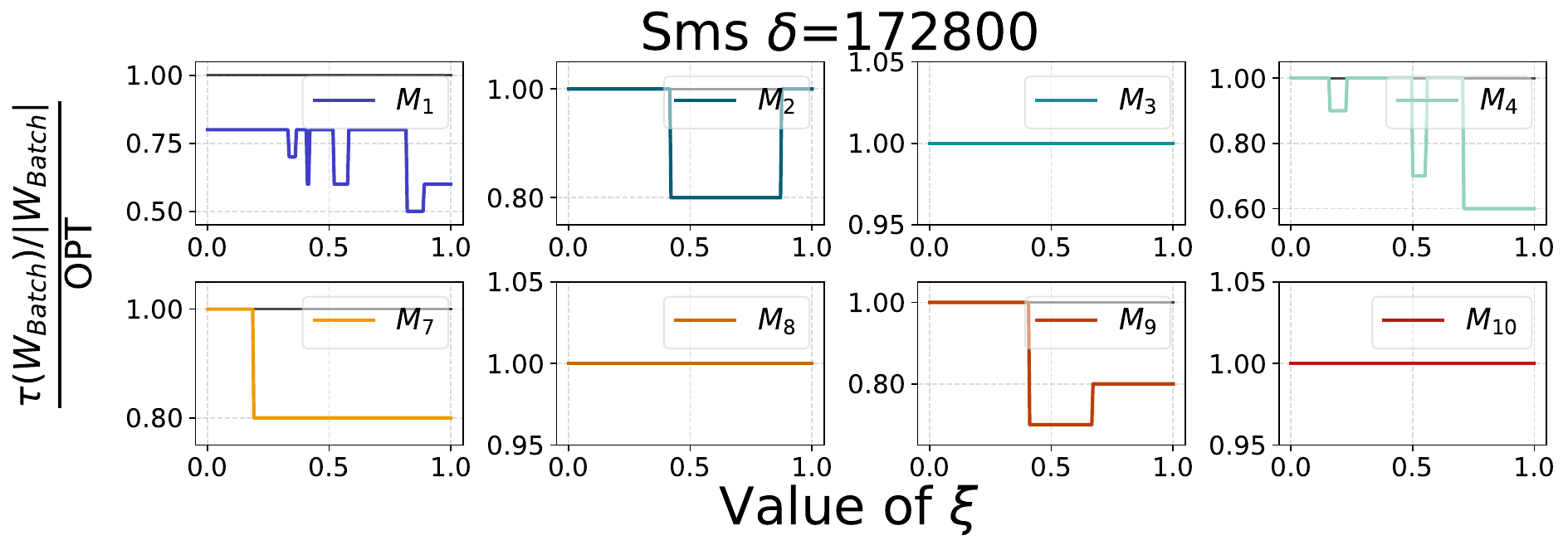}
	\end{tabular}
	\caption{Approximation ratio of~\algbatch for varying $\xi$.}
	\label{fig:xiincrease}
\end{figure}

\begin{figure}[t]
	\centering
	\begin{tabular}{l}
		\includegraphics[width=0.8\linewidth]{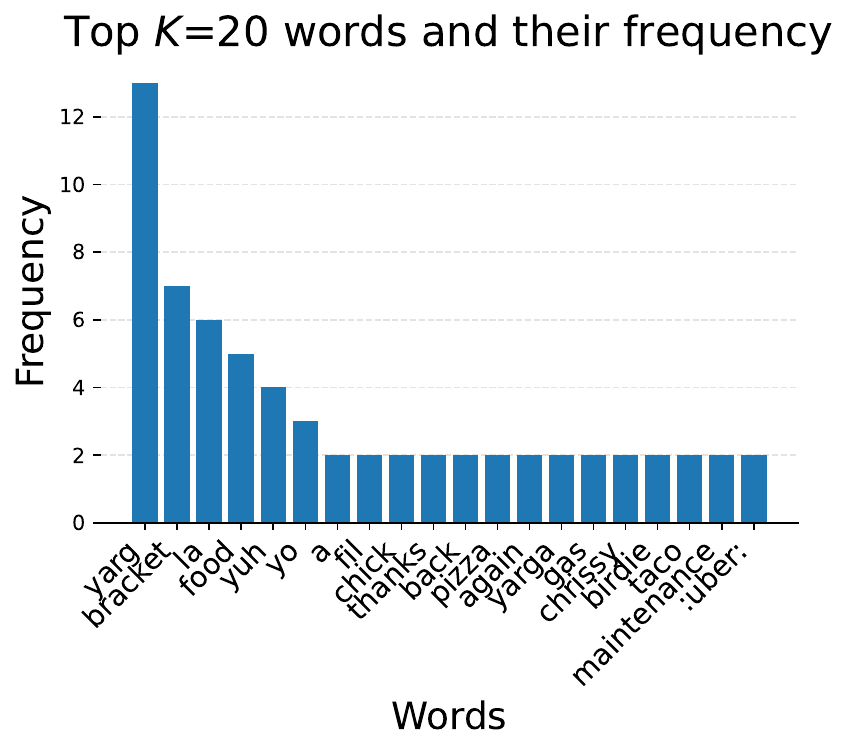} 
	\end{tabular}
	\caption{Histogram associated to the words inside messages collected on the edges of $T[W^*]$ on the Venmo dataset, for $\mathsf{M}_5$ and $\delta=172800$ (i.e, two days). See Section~\ref{subsec:casestudy} for more details, and Figure~\ref{fig:venmoStarSols} for a representation of the optimal subnetwork.}
	\label{fig:vemoWordCloud}

\end{figure}

\section{Additional experimental results}
\subsection{Memory usage}
\label{subsubsec:memory} 
We measured the peak RAM memory over one single execution of each algorithm on the configurations of Section~\ref{subsec:appxration}. We show the results in Figure~\ref{fig:memory} --- recall that the memory limit was set to 150\,GB on all datasets but EquinixChicago where the limit was set to 200\,GB. We do not report data for algorithms that do not finish within three hours.

Overall, the memory usage of the different algorithms strongly depends on the temporal motif considered, and in general motifs $\mathsf{M}_1,\mathsf{M}_2$ and $\mathsf{M}_5$ require much more memory that  other motifs on almost all datasets. We observe that on such motifs~\algrand and \alghybrid use much less memory compared to the baselines, saving on $\mathsf{M}_2$ more than 100\,GB on datasets Stackoverflow and Bitcoin and about 90\,GB on the Reddit dataset. This is due to the fact that, differently from all other algorithms, our randomized algorithms do not store the set $\mathcal{H}$ of $k$-CIS  (see Table~\ref{tab:ALDENTEalgs}). This supports that the algorithms in \suitename are very practical and scalable on large datasets. 

\subsection{Varying $\xi$} \label{subsubsec:varyingxi} 
We show how the solution obtained by~\algbatch\ varies according to the value of $\xi$, this setting is used to understand the best possible solution that can be obtained by \algrand according to various values of $\xi$ (i.e., when  $\varepsilon \approx 0$ then \algrand converges to the solution provided by \algbatch). We will use datasets Askubuntu and Sms considering the motifs of Fig.~\ref{fig:gridMotifs} (excluding $\mathsf{M}_5$ and $\mathsf{M}_6$ given their high running time). In particular, we start from $\xi=10^{-3}$ and increase it with a step of $5\cdot10^{-3}$ until it reaches 1
and we execute \algbatch with each different value of such parameter. We then compare each solution obtained by \algbatch to the optimal solution obtained by~\algex. The results are shown in Fig.~\ref{fig:xiincrease}. We see that in general smaller values of $\xi$ correspond to solutions with better approximation ratio obtained by \algbatch. On some instances by varying $\xi$ the solution may vary significantly (e.g., $\mathsf{M}_2$ on Askubuntu). In general the algorithm achieves satisfactory approximations on most configurations for $\xi<0.75$, given that for such range the value of the solution obtained by \algbatch is often within 80\% of the optimal solution found with~\algex. We also observe that in some settings the algorithm never outputs solutions with optimal densities (e.g., motif $\mathsf{M}_{2},\mathsf{M}_{8}$ on Askubuntu or $\mathsf{M}_1$ on Sms). This supports the design of \alghybrid, since for \algrand it may often be infeasible to converge to an optimal solution, as its guarantees are with respect to \algbatch (see Theorem~\ref{lemma:ratioSampling}). In our extended online version~\cite{SarpeTMDS2024} we show additional results on the convergence of~\algrand to~\algbatch by varying the parameter $r$ which controls the sample size, used to compute the approximate temporal motif degrees of the various vertices.

\subsection{Case study}\label{appsubsec:caseStudy}

We report in Figure~\ref{fig:vemoWordCloud} the frequencies of top-20 words over the transactions associated to the optimal subnetwork from Figure~\ref{fig:venmoStarSols} (right). We observe terms related to social activities (i.e., food, pizza, etc.), and terms that may be related to gambling (e.g., bracket, yarg), we removed emojis from the histogram for ease of visualization.
\fi

\ifextended

\begin{table*}[t]
	\centering
	\caption{\emph{Directed} denotes if the edges of the network are directed or not. \emph{Aggregation} means: (i) the objective is based on purely static quantities over snapshots (i.e., it discards temporal information) (ii) the edges are assumed to be in the form $(u,v,[t_1,t_2])$, that is each edge is associated with an interval instead of a timestamp, in such interval the graph is static (iii) the optimal scores are computed on specific time-intervals of the network. \emph{High-order structures} denotes if the objective function of the formulation considers more than node or edge degrees, accounting for cliques or larger patterns. $\mathsf{D}$ vs $\mathsf{C}$ denotes if the objective is to compute cores ($\mathsf{C}$) or dense subnetworks ($\mathsf{D}$). 
		\emph{Segmentation} denotes that the formulation controls the temporal properties of the reported solution, e.g., burstiness-persistence or a specific temporal support duration.}
	\label{tab:algsComparison}
	\scalebox{0.95}{
		\begin{tabular}{cccccc}
			\toprule
			Formulation & Directed & Aggregation & High-order structures& Density ($\mathsf{D}$) vs Cores ($\mathsf{C}$) & 
			Segmentation\\
			\midrule
			\ourproblem (ours) & \vmark & \xmark & \vmark & $\mathsf{D}$ & \xmark \\
			\cite{Galimberti2020} & \xmark & \vmark & \xmark & $\mathsf{C}$ & \vmark \\
			\cite{Li2018} & \xmark & \vmark & \xmark & $\mathsf{C}$ &  \vmark\\
			\cite{Oettershagen2023Core} & \xmark & \xmark & \xmark & $\mathsf{C}$ & \xmark\\
			\cite{Qin2022} & \xmark & \vmark & \xmark & $\mathsf{C}$ & \vmark\\
			\cite{Zhong2024} & \xmark & \vmark & \xmark & $\mathsf{C}$ & \vmark\\
			\cite{Lin2022} & \xmark & \vmark & \vmark & $\mathsf{D}$ &  \vmark\\
			\cite{Qin2023} & \xmark & \vmark & \xmark & $\mathsf{D}$ &  \vmark\\
			\cite{Rozenshtein2019} & \xmark & \vmark & \xmark & $\mathsf{D}$ & \vmark\\
			\bottomrule
	\end{tabular}}
\end{table*}

\section{\ourproblem vs existing formulations}~\label{appsubsec:existingProblems}
In this section we present an overview of the existing formulations for cohesive subnetwork discovery in temporal networks comparing them with our \ourproblem formulation. The extensive comparison is reported in Table~\ref{tab:algsComparison}.
As a summary of such table we highlight the following, our formulation is the unique of its kind combining \emph{high-order structures} with \emph{purely temporal informal information} avoiding therefore temporal aggregation, and additionally it can be used on directed temporal networks. 

First we observe that most of the existing formulations work on \emph{undirected} temporal networks, and it is not straightforward to adapt (or extend) them on directed temporal networks, which are more general models. 
Our formulation naturally captures directed temporal networks, and in fact fully exploits such information, as an example,  with the definition of temporal motif that we assume (Definition~\ref{def:k-l-temporal-motif}) there are eight possible triangles on directed temporal networks, but \emph{only one} such triangle exists on undirected temporal networks, reducing significantly the information that we can gain by analyzing such patterns. 
Then our formulation avoids a temporal network model leading to aggregation, which is known to cause information loss~\cite{Holme2019}, as computing snapshots of temporal networks (or constraining optimal scores to specific fixed intervals) may obfuscate the general temporal properties in the data. 
Additionally, only one of the surveyed formulations (i.e.,~\cite{Lin2022}) accounts for high-order structures in the form of \emph{static quasi-cliques} which is a completely different formulation compared to \ourproblem. Our formulation is in fact much more flexible as the temporal motif can by \emph{any arbitrary temporal motif} of interest. Finally we also avoid segmentation as the constrain on the time-interval length $\delta$ well correlates with the overall timespan of the reported solution, and with more complex weighting functions $\tau$ defined over $\mathcal{S}_T$  it is also possible to control for the temporal properties of the reported solution (e.g., a persistence score from~\cite{Belth2020}).

\section{Randomized algorithms}~\label{appsec:randAlg}
In this section we present some missing results about our \suitename algorithms based on randomized sampling, that is in Section~\ref{sec:embeddingSamplingAlg} we discuss how to estimate temporal motif degrees of the vertices and leverage such results with an existing sampling algorithm to prove concentration of the various estimates, these results are used both for~\algrand and for~\alghybrid. Lastly, in Section~\ref{appsubsec:pseudocodehybrid} we illustrate the pseudocode of~\alghybrid.
\subsection{Estimating temporal motif-degrees}\label{sec:embeddingSamplingAlg}
\subsubsection{Building the estimator}\label{appsubsubsec:buildingEstimate} We briefly recall from Section~\ref{sec:algorithms} how to compute a $(1\pm\varepsilon)$ estimate of the temporal motif degree $\temporalDegreeWeightNetwork{W}{w}$ of a vertex $w\in W$. First recall that we assume access to a sampling algorithm $\mathcal{A}$, which, 
given as input a sub\-network $T[W]$, $W\subseteq V_T$, 
it outputs a sample $T'=(V',E')\subseteq T[W]$. Using such a sample $T'$ we can compute an estimate   $\estimateTemporalDegreeWeightPrime{W}{w}{T'}$ of the degree $\temporalDegreeWeight{W}{w}$,
for each $w\in W$ as 
\begin{equation}\label{eq:prestoEst}
	\estimateTemporalDegreeWeightPrime{W}{w}{T'} = \sum_{S \in \mathcal{S}_{{T}'}:w\in S}\frac{\tau(S)}{p_S},
\end{equation}
where $p_S$ is the probability of observing $S\in \mathcal{S}_{{T}'}$ 
in the sampled sub\-network ${T}'=({V}', {E}')$.
\begin{lemma}\label{lemma:unbiasedness}
	For any $w\in W$, the count $\estimateTemporalDegreeWeightPrime{W}{w}{T'}$ computed on a sampled subnetwork $T'\subseteq T[W]$ is an unbiased estimate of $\temporalDegreeWeight{W}{w}$.
\end{lemma}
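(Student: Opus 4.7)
The plan is to prove unbiasedness by a standard Horvitz–Thompson style argument: rewrite the estimator as a sum over \emph{all} $\delta$-instances in $\mathcal{S}_W$ containing $w$, each weighted by the indicator that the instance survives in the sampled subnetwork $T'$, and then apply linearity of expectation together with the defining identity $p_S = \mathbb{P}[S \in \mathcal{S}_{T'}]$.

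First I would observe that since $T' \subseteq T[W]$, every $\delta$-instance in $\mathcal{S}_{T'}$ is also a $\delta$-instance in $\mathcal{S}_W$ (the bijection and the $\delta$ timing constraint are inherited). Hence the sum in Eq.~\eqref{eq:prestoEst} can be rewritten as a sum over $\{S \in \mathcal{S}_W : w \in S\}$ using an indicator:
\[
\estimateTemporalDegreeWeightPrime{W}{w}{T'} \;=\; \sum_{\substack{S \in \mathcal{S}_W \\ w \in S}} \frac{\tau(S)}{p_S}\,\mathbbm{1}[S \in \mathcal{S}_{T'}].
\]
Here I implicitly assume that $p_S > 0$ for every $S \in \mathcal{S}_W$ with $\tau(S) > 0$, which is guaranteed by any reasonable unbiased sampler such as \texttt{PRESTO-A}; otherwise the estimator is ill-defined.

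Next I would take expectations over the randomness of $\mathcal{A}$. Since $\tau(S)$ and $p_S$ are deterministic quantities and only $\mathbbm{1}[S \in \mathcal{S}_{T'}]$ is random, linearity of expectation yields
\[
\mathbb{E}\!\left[\estimateTemporalDegreeWeightPrime{W}{w}{T'}\right] \;=\; \sum_{\substack{S \in \mathcal{S}_W \\ w \in S}} \frac{\tau(S)}{p_S}\,\mathbb{P}[S \in \mathcal{S}_{T'}] \;=\; \sum_{\substack{S \in \mathcal{S}_W \\ w \in S}} \tau(S) \;=\; \temporalDegreeWeight{W}{w},
\]
where the second equality uses the definition $p_S = \mathbb{P}[S \in \mathcal{S}_{T'}]$ and the third is the definition of the temporal motif degree, concluding the argument.

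The proof is essentially bookkeeping; the only real subtlety is ensuring that $p_S$ is well-defined and positive for every contributing instance, which is a property that must be verified (or assumed) of the underlying sampler $\mathcal{A}$. For the sampler used in \suitename this follows from the construction of \texttt{PRESTO-A}, so no further work is required here.
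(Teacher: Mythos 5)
Your proposal is correct and follows essentially the same route as the paper's proof: rewrite the estimator as a sum over all $\delta$-instances in $\mathcal{S}_{T[W]}$ containing $w$ weighted by the indicator $\mathbbm{1}[S\in\mathcal{S}_{T'}]$, then apply linearity of expectation and the identity $\mathbb{E}[\mathbbm{1}[S\in\mathcal{S}_{T'}]]=p_S$. The extra remark on the positivity of $p_S$ is a sensible (if implicit in the paper) hygiene check but does not alter the argument.
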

\begin{proof}
	We can rewrite the estimator as
	\[
	\estimateTemporalDegreeWeightPrime{W}{w}{T'} = \sum_{S \in \mathcal{S}_{T[W]} :w\in S}\frac{\tau(S)}{p_S}\mathbbm{1}[ \{S\in \mathcal{S}_{{T}'} \} ],
	\]
	where $\mathbbm{1}[\cdot]$ is an indicator random variable taking value 1 if the predicate holds, 
	and 0 otherwise. 
	By taking  expectation on both sides, combined with the linearity of expectation, 
	and the fact that $\mathbb{E}[\mathbbm{1}[ \{S\in \mathcal{S}_{{T}'} \} ]] = p_S$ 
	the claim follows by the definition of $\temporalDegreeWeight{W}{w}$.
\end{proof}
Given a subnetwork $T[W]\subseteq T, W\subseteq V_T$ we execute the sampling algorithm $\mathcal{A}$ to obtain $r\in \mathbb{N}$ i.i.d.\ sampled subnetworks $\mathcal{D}=\{{T}_1, \dots, {T}_r : {T}_i =\mathcal{A}(T[W]) \subseteq T[W], i\in[r] \}$ (in line~\ref{algsampl:getSamples}). We can therefore compute the sample average of the estimates obtained on each sampled subnetwork obtaining the estimators $\estimateTemporalDegreeWeight{W}{w}$ for $w \in W$, where
\[
\estimateTemporalDegreeWeight{W}{w} = \frac{1}{r}\sum_{i=1}^r \estimateTemporalDegreeWeightPrime{W}{w}{T_i}, \quad \text{for all } w \in W,
\]
and $\estimateTemporalDegreeWeightPrime{W}{w}{T_i}$ is computed with Equation~\eqref{eq:prestoEst} on the $i$-th sampled subnetwork $T_i$ from the randomized sampling algorithm $\mathcal{A}$. 
As discussed in Section~\ref{sec:algorithms}, it also holds that $\hat{\tau}(W) = 1/k \sum_{w\in W} \estimateTemporalDegreeWeight{W}{w}$ is an unbiased estimate of $\tau(W)$, as $\mathbb{E}[\estimateTemporalDegreeWeight{W}{w}] = \temporalDegreeWeight{W}{w}$ by Lemma~\ref{lemma:unbiasedness}  combined with the linearity of expectation and the fact that $\tau(W)=1/k\sum_{w\in W} \temporalDegreeWeight{W}{w}$, which is an important property for the final convergence guarantees of our algorithms in \suitename.

\subsubsection{Leveraging a sampling algorithm}\label{appsubsubsec:embeddingSamplingAlg}  Algorithm~\ref{alg:kepsilonapproxsampling} can employ any sampling algorithm $\mathcal{A}:T\mapsto 2^T$ as subroutine. In practice we used the state-of-the-art sampling algorithm \texttt{PRESTO-A}~\cite{Sarpe2021} for estimating temporal motif counts of arbitrary shape.\footnote{Other specialized sampling algorithms for triangles or butterflies can be adopted in our~\suitename algorithms speeding up our suite~\cite{Wang2020a, Pu2023TemporalButterfly}.} For \texttt{PRESTO-A} the authors provide bounds on the number of samples $r$ for event ``$\hat{\tau}(W_i)\in (1\pm \varepsilon)\tau(W_i)$'' to hold with arbitrary probability $>1-\eta$ for $\eta\in(0,1)$, which needs to be adapted to work in our scenario since we require \emph{stricter} guarantees. We adapt such algorithm to general weighting functions $\tau$ over the set of $\delta$-instances, and to compute $\estimateTemporalDegreeWeight{W_i}{w}, w\in W_i$ at each iteration $i$ of Algorithm~\ref{alg:kepsilonapproxsampling}. We now briefly describe how \texttt{PRESTO-A} works and proceed to illustrate how to adapt the bound in~\cite{Sarpe2021} to the function  $\mathtt{GetBound}$, since a tighter concentration result is needed.

Given a subnetwork $T[W_i]$, \texttt{PRESTO-A} samples a small-size subnetwork ${T}'=({V}',{E}')\subseteq T[W_i]$ ensuring that all its edges are contained in a window of length $q\delta, q>1$, i.e., let $t_1,\dots, t_{|{E}'|}$ be the sorted timestamps of the edges of $T'$ then $t_{|{E}'|}-t_1\le q\delta$. 
On such subnetwork \texttt{PRESTO-A} computes Equation~\eqref{eq:prestoEst} for each $w\in W_i$, where $p_S = \frac{q\delta - \Delta_S}{\Delta_{T[W_i]}}$, where $\Delta_S$ is the length of the $\delta$-instance $S\in\mathcal{S}_{{T}'}$ and $\Delta_{T[W_i]} = \bigO(t_{|E_T[W_i]|} - t_1)$, i.e., it is simply proportional to the length of the timespan of $T[W_i]$. 
Such algorithm extracts $r_i$ sampled subnetworks and computes the unbiased estimates $\estimateTemporalDegreeWeight{W_i}{w}, w\in W_i$ and $\hat{\tau}(W_i) = 1/k \sum_{w\in W_i} \estimateTemporalDegreeWeight{W_i}{w}$. 
Next we show 
that leveraging the techniques in~\cite{Sarpe2021} we obtain a bound on the number of samples $r_i, i\in [I]$ computed through $\mathtt{GetBound}$.
In particular, by performing a similar analysis to the one in~\cite{Sarpe2021} we obtain the following corollary, which specifies the output of the function $\mathtt{GetBound}$.

\begin{corollary}\label{corol:sampleSize}
	Let $T[W_i]$ be the induced temporal network by $W_i$ at an arbitrary iteration  $i$ of Algorithm~\ref{alg:kepsilonapproxsampling}, given $\varepsilon>0, \eta/2^i>0,q>1$ if $r_i\ge \left(\frac{\Delta_{T[W_i]}}{(q-1)\delta}-1\right) \frac{1}{(1+\varepsilon)\ln(1+\varepsilon)-\varepsilon}\ln\left(\frac{2|W_i|}{\eta/2^i}\right)$ then $\estimateTemporalDegreeWeight{W_i}{w}\in(1\pm\varepsilon)\temporalDegreeWeight{W_i}{w}$ simultaneously for each $w\in W_i$ 
	with probability  $>1-\eta/2^i$.
\end{corollary}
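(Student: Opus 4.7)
The plan is to lift the global-count concentration result of \texttt{PRESTO-A} from~\citet{Sarpe2021} to every temporal-motif degree simultaneously, by combining a per-vertex multiplicative Chernoff bound with a union bound over the vertices in~$W_i$. Since the per-iteration failure budget is~$\eta/2^i$, we must make the per-vertex failure probability at most~$(\eta/2^i)/|W_i|$, which is precisely what forces the logarithmic factor $\ln(2|W_i|/(\eta/2^i))$ in the stated bound.

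First I would fix an arbitrary vertex $w \in W_i$. By the unbiasedness lemma stated just before the corollary, each of the $r_i$ per-sample estimates $\estimateTemporalDegreeWeightPrime{W_i}{w}{T_j}$ is i.i.d.\ with mean $\temporalDegreeWeight{W_i}{w}$, so $\estimateTemporalDegreeWeight{W_i}{w}$ is the sample mean of $r_i$ i.i.d.\ unbiased estimators. Next I would bound the range of each estimator. \texttt{PRESTO-A} samples an edge window of length $q\delta$, and any $\delta$-instance $S$ has duration $\Delta_S \le \delta$, so $p_S = (q\delta - \Delta_S)/\Delta_{T[W_i]} \ge (q-1)\delta / \Delta_{T[W_i]}$. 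Plugging this into Eq.~\eqref{eq:prestoEst} upper bounds each inverse-probability weight, and hence each per-sample contribution, by a factor of $\Delta_{T[W_i]}/((q-1)\delta)$ times the natural per-instance scale. This is the key quantity that appears, in the centred/scaled form $\Delta_{T[W_i]}/((q-1)\delta) - 1$, inside the Chernoff exponent.

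With this bounded-range property in hand, I would invoke the sharp multiplicative Chernoff bound for sums of nonnegative bounded i.i.d.\ random variables, whose exponent is $r_i \,((1+\varepsilon)\ln(1+\varepsilon)-\varepsilon)$ divided by the scaled range. Solving for the $r_i$ that makes this failure probability at most $(\eta/2^i)/|W_i|$ yields exactly the expression in the statement. A union bound over all $w\in W_i$ then converts this into the simultaneous guarantee, giving overall failure probability at most $|W_i| \cdot (\eta/2^i)/|W_i| = \eta/2^i$, as required. The argument for the lower-tail $(1-\varepsilon)$ side follows symmetrically from the analogous Chernoff inequality, and the two tails together give the two-sided event $\estimateTemporalDegreeWeight{W_i}{w}\in(1\pm\varepsilon)\temporalDegreeWeight{W_i}{w}$.

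The main technical obstacle is justifying the $-1$ refinement in the factor $\bigl(\Delta_{T[W_i]}/((q-1)\delta)-1\bigr)$, rather than the cruder $\Delta_{T[W_i]}/((q-1)\delta)$ one would obtain from a naive Chernoff--Hoeffding application: this improvement comes from carefully centring the estimator about its mean before applying the Bennett-type bound, exactly as done in the \texttt{PRESTO-A} analysis, and is what makes the bound genuinely a ``Corollary'' of their techniques rather than requiring a fresh concentration argument. Everything else is bookkeeping: verifying that the sharper exponent $(1+\varepsilon)\ln(1+\varepsilon)-\varepsilon$ survives the per-vertex reduction, and that the union bound cost $\ln(2|W_i|)$ cleanly absorbs into the $\ln(2|W_i|/(\eta/2^i))$ factor appearing in~$r_i$.
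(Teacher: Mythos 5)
Your proposal is correct and follows essentially the same route as the paper: both fix a vertex, reuse the \texttt{PRESTO-A} concentration analysis (unbiasedness, the bound $1/p_S\le \Delta_{T[W_i]}/((q-1)\delta)$ coming from $\Delta_S\le\delta$, and the Bennett-type exponent $(1+\varepsilon)\ln(1+\varepsilon)-\varepsilon$) with the temporal motif degree in place of the global count, shrink the per-vertex failure probability to $(\eta/2^i)/|W_i|$, and finish with a union bound over $W_i$. The only cosmetic difference is that the paper's sketch routes through the second-moment bound $\mathbb{E}[\estimateTemporalDegreeWeightPrime{W_i}{w}{T'}^2]\le \temporalDegreeWeight{W_i}{w}^2\,\Delta_{T[W_i]}/((q-1)\delta)$ of the \texttt{PRESTO} analysis, whereas you phrase the same input as a bounded-range condition; both feed the same cited theorem and yield the identical sample-size expression.
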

\begin{proof}[Proof (skecth).]
\sloppy{
The proof is identical and directly follows from the guarantees obtained in~\cite[Theorem 3.4., extended arXiv version]{Sarpe2021} combined with a union bound. Fix an arbitrary vertex $w\in W_i$ and a sample $T'\subseteq W_i$, then from Lemma~\ref{lemma:unbiasedness} the estimator $\estimateTemporalDegreeWeightPrime{W_i}{w}{T'}$ is unbiased where under the sampling algorithm \texttt{PRESTO-A} we set $p_S = \tfrac{q\delta - \Delta_S}{\Delta_{T[W_i]}}$ where $\Delta_S$ is the duration of each $\delta$-instance $S\in \mathcal{S}_{W_i}$ and $\Delta_{T[W_i]} = t_{|E_T[W_i]|-\ell}-t_\ell + c\delta$ where $t_i$ denotes the timestamp of the $i$-th edge (of $E_T[W_i]$) assuming that such edges are sorted by increasing  timestamps. This implies that the bound on the variance of such estimator is the same as~\cite[Lemma 3.2., extended arXiv version]{Sarpe2021} where the global motif count is replaced with the temporal motif degree, i.e., $\mathbb{E}[\estimateTemporalDegreeWeightPrime{W_i}{w}{T'}^2] \le \frac{\temporalDegreeWeight{W_i}{w}^2 \Delta_{T[W_i]} }{ (c-1)\delta}$. Using~\cite[Theorem 3.4., extended arXiv version]{Sarpe2021} for an arbitrary vertex $w\in W_i$ it holds, that if $r_i \ge \left(\frac{\Delta_{T[W_i]}}{(q-1)\delta}-1\right) \frac{1}{(1+\varepsilon)\ln(1+\varepsilon)-\varepsilon}\ln\left(\frac{2}{\eta/2^i}\right)$ then
$\mathbb{P}[|\estimateTemporalDegreeWeight{W}{w}-\temporalDegreeWeight{W}{w}| \ge \varepsilon\temporalDegreeWeight{W}{w}] \le \eta/2^i$.
The final claims from a direct application of the union bound over all vertices $w\in W_i$.}
\end{proof}

\subsection{ \suitehybrid --- pseudocode}\label{appsubsec:pseudocodehybrid}
\begin{algorithm}[t]
	\caption{\suitehybrid}\label{alg:hybridpeel}
	\KwIn{$T=(V_T,E_T), M , \delta\in \mathbb{R}^+, \tau, \xi>0, \varepsilon >0, \eta \in (0,1), J\ge1$.}
	\KwOut{$W$.}
	$W_1 \gets V_T$\;
	\For{$(i\gets 1,\dots,J) \land (W_i \neq \emptyset)$\label{alghyb:mainLoop}}{
		$r_i \gets \mathtt{GetBound}(W_{i}, \varepsilon, \eta/J)$\label{alghyb:boundSamples}\;
		$\mathcal{D}=\{T_1,\dots,T_{r_i}\} \gets \mathcal{A}^{r_i}(T[W_i])$\label{alghyb:getSamples}\;
		$\estimateTemporalDegreeWeight{W_i}{w_1},\dots,\estimateTemporalDegreeWeight{W_i}{w_{|W_i|}}\gets \mathtt{GetEstimates}(W_i, \mathcal{D})$\label{alghyb:computeEstimates}\;
		$\hat{\tau}(W_i)\gets 1/k \sum_{w\in W_i} \estimateTemporalDegreeWeight{W_i}{w}$\label{alghyb:estimateTotalCount}\;
		$L(W_i)\gets\left\{w\in W_i : \estimateTemporalDegreeWeight{W_i}{w}\le k(1+\xi) \hat{\tau}(W_i)/|W_i|\right\}$\label{alghyb:peelNodes}\;
		$W_{i+1} \gets W_i\setminus L(W_i)$; $i\gets i+1$\label{alghyb:setNextIt}\;
	}
	$W' \gets $ \suitegreedy($T[W_{J+1}], M, \delta, \tau$)\label{alghyb:callGreedyPeel}\;
	\KwRet{$W=\argmax_{j=1,\dots,J}\{\hat{\tau}(W_j)/|W_j|, \fobj(W')\}$.\label{alghyb:return}}
\end{algorithm}

We report the missing pseudocode of~\suitehybrid in Algorithm~\ref{alg:hybridpeel}. The algorithm proceeds as follows, it runs for $J$ iterations the randomized batch peeling approach (lines~\ref{alghyb:mainLoop}-\ref{alghyb:setNextIt}), obtaining therefore the sets $W_1,\dots,W_J$ and $W_{J+1}$. For the vertex sets $W_i, i\in[J]$ the algorithm has only approximate density $\hat{\tau}(W_i)/|W_i|$ where the approximation factor is controlled by the parameter $\varepsilon$ (i.e., $\hat{\tau}(W_i)\in (1\pm\varepsilon)\tau(W_i)$), instead on $W_{J+1}$ the algorithm can be summarized by applying~\suitegreedy on $T[W_{J+1}]$ (that we discuss in details in Section~\ref{subsec:greedy}). Therefore greedy peeling is applied on $T[W_{J+1}]$, which proceeds by removing vertices one at a time according to their exact temporal motif degree in $T[W_{J+1}]$ (line~\ref{alghyb:callGreedyPeel}). That is, the algorithm computes all temporal motif-instances in  $T[W_{J+1}]$ and it greedily removes, one at a time, the vertices with minimum temporal motif-degree, \algkapp therefore returns the solution with maximum density among the $\bigO(|W_{J+1}|-k)$ vertex sets observed (i.e., it starts from $W_{J+1}$ and peels it one vertex at a time until it becomes empty). The solution with maximum density is finally returned by \alghybrid, where the maximum is taken over the $J$ solutions obtained in the randomized batch peeling loop and the best solution identified by~\algkapp when executed on $T[W_{J+1}]$ (line~\ref{alghyb:return}). The approximation ratio~\alghybrid is shown in Theorem~\ref{lemma:ratioHybrid}, and it is presented in the next section. 

We recall that intuitively, \alghybrid achieves very good performances and scales on large data since, at the beginning of its execution during the first $J$ iterations (when the temporal network is large) it relies on the approximate estimation of temporal motif degrees for the batch peeling step, while when the network size is small (after the $J$ iterations)~\alghybrid peels vertices one at a time according to their actual temporal motif degrees in the remaining subnetwork, which can be computed efficiently.

\section{Missing proofs}\label{sec:missingProofs}

\begin{figure}[t]
	\centering
		\begin{tikzpicture}
			[green place/.style={circle,draw=blue!60,fill=ciano!20,thick,
				inner sep=0pt,minimum size=6mm}, 
			bend angle=22,
			pre/.style={<-,>=stealth',semithick},
			post/.style={->,>=stealth',semithick},
			arancio place/.style={circle,draw=arancio!60,fill=arancio!90,thick,
				inner sep=0pt,minimum size=6mm},
			verde place/.style={circle,draw=verde!60,fill=verde!90,thick,
				inner sep=0pt,minimum size=6mm},
			ciano place/.style={circle,draw=ciano!60,fill=ciano!90,thick,
				inner sep=0pt,minimum size=6mm},
			violetto place/.style={circle,draw=violetto!60,fill=violetto!90,thick,
				inner sep=0pt,minimum size=6mm},
			rossiccio place/.style={circle,draw=rossiccio!60,fill=rossiccio!80,thick,
				inner sep=0pt,minimum size=6mm},
			white place/.style={circle,draw=white!100,fill=white!100,thick,
				inner sep=0pt,minimum size=1mm},
			mylabel/.style={midway,text width=3mm, inner sep=1.5pt, fill=white, align=center}]
			\node[rectangle,thick, draw=rossiccio!80] (D3) {\Large Lemma~\ref{lemma:kapproxratio}};
			
			\node[rectangle,thick, draw=rossiccio!80] (D4) [above right=1.5of D3] {\Large Lemma~\ref{lemma:appxbatchpeel}};
			\node[rectangle,thick, draw=rossiccio!80] (D5) [right=1.5of D4] {\Large Lemma~\ref{lemma:batchPeelIt}};
			
			\node[rectangle,fill=blue!20,thick, draw=rossiccio!80] (41) [above=1.5of $(D4)!0.5!(D5)$] {\Large Theorem~\ref{lemma:ratioSampling}};
			
			\node[rectangle,fill=blue!20,thick, draw=rossiccio!80] (42) [above=2.88of D3] {\Large Theorem~\ref{lemma:ratioHybrid}};
			
			\draw[rossiccio,>=stealth',semithick, ->] (D4.south) |- (D3.east);
			\path[post,rossiccio] (41) edge [bend left=0] (D4);
			\path[post,rossiccio] (41) edge [bend left=0] (D5);
			
			\path[post,rossiccio] (42) edge [bend left=0] (41);
			\path[post,rossiccio] (42) edge [bend left=0] (D3);			
	\end{tikzpicture}
	\caption{Dependency graph for the main results we obtain, i.e., Theorem~\ref{lemma:ratioSampling} and Theorem~\ref{lemma:ratioHybrid}.}
	\label{fig:proofStruct}
\end{figure}
 In this section we present the missing proofs for Theorem~\ref{lemma:ratioSampling}, Theorem~\ref{lemma:ratioHybrid}, and Lemma~\ref{lemma:staticvstemporal}. We show in Figure~\ref{fig:proofStruct} the dependency graph for the proofs of Theorem~\ref{lemma:ratioSampling} and Theorem~\ref{lemma:ratioHybrid} as they depend on results that we obtain in Section~\ref{appsec:baselines}, for a better understanding, we suggest to follow the proofs in reverse order of dependence.

\begin{proof}[Proof of Theorem~\ref{lemma:ratioSampling}]
	We first fix the random variable corresponding to the number of iterations of Algorithm~\ref{alg:kepsilonapproxsampling} to $I\le \bigO(\log_{1+\xi}(n))$ since an analogous result to Lemma~\ref{lemma:batchPeelIt} holds also for Algorithm~\ref{alg:kepsilonapproxsampling}, i.e., we are conditioning on the realization of $I$ in its range. In the following analysis we assume that the events $``\estimateTemporalDegreeWeight{W_i}{w}\in (1\pm\varepsilon) \temporalDegreeWeight{W_i}{w}, w \in W_i$'' hold at each iteration  $i\in[I]$ of our Algorithm~\ref{alg:kepsilonapproxsampling} with probability $> 1-\eta/2^i$, note that this is ensured by the function \texttt{GetBound}.
	
	First we note that there should exist at least one vertex in $L(W_i), i\in [I]$, since for contradiction let $\estimateTemporalDegreeWeight{W_i}{w} > k(1+\xi)\hat{\tau}(W_i)/|W_i|$ for each $w\in W_i$, then we can prove the following $k\hat{\tau}(W_i)= \sum_{w\in W_i} \estimateTemporalDegreeWeight{W_i}{w} > k(1+\xi)|W_i|\hat{\tau}(W_i)/|W_i|$ hence $1>(1+\xi), \xi>0$, leading to the contradiction.
	The analysis is similar to Lemma~\ref{lemma:appxbatchpeel} but  complicated by the fact that the quantities involved are estimates. Consider the iteration in which the first vertex $v\in W^*$ is removed, let $W'$ be the set of vertices in such iteration and notice that $W^*\subseteq W'$. 
	Connecting $\hat{\tau}(W')$ and $\tau(W')$ through $\estimateTemporalDegreeWeight{W'}{w},w\in W'$.
	\[
	\hat{\tau}(W') = \frac{1}{k}\sum_{w\in W'} \estimateTemporalDegreeWeight{W'}{w} \ge \frac{1}{k}\sum_{w\in W'} (1-\varepsilon) \temporalDegreeWeight{W'}{w} \ge (1-\varepsilon)\tau(W')
	\]
	And similarly $(1+\varepsilon)\tau(W') \ge \hat{\tau}(W')$. We recall that for $v\in W^*$ it holds that $\temporalDegreeWeight{W^*}{v}\ge \tau(W^*)/|W^*|$. And that $\estimateTemporalDegreeWeight{W'}{v}\ge (1-\varepsilon) \temporalDegreeWeight{W'}{v} \ge (1-\varepsilon) \temporalDegreeWeight{W^*}{v}$, since $W^*\subseteq W'$. Combining everything we get
	\begin{align}
		&k(1+\xi)(1+\varepsilon) \frac{\tau(W')}{|W'|} \ge k(1+\xi)\frac{\hat{\tau}(W')}{|W'|} \ge \estimateTemporalDegreeWeight{W'}{v} \ge (1-\varepsilon)\temporalDegreeWeight{W^*}{v}\nonumber\\
		& \ge (1-\varepsilon) \frac{\tau(W^*)}{|W^*|} \Leftrightarrow \frac{\tau(W')}{|W'|} \ge \frac{(1-\varepsilon)}{k(1+\xi)(1+\varepsilon)} \frac{\tau(W^*)}{|W^*|}. \label{eq:boundWprime}
	\end{align}
	Let $W$ be the vertex set returned by Algorithm~\ref{alg:kepsilonapproxsampling} and $W'$ the solution achieving the approximation in Equation~\eqref{eq:boundWprime}, note that $\hat{\tau}(W')/|W'|\ge (1-\varepsilon)\fobj(W')$ and the output $W$ may be such that $\hat{\tau}(W)/|W|\ge (1-\varepsilon)\fobj(W')$ but with $\fobj(W)\le\fobj(W')$, but noting that $\hat{\tau}(W)\in(1\pm\varepsilon)\tau(W)$ then $ (1-\varepsilon)\fobj(W')\le \hat{\tau}(W)/|W| \le (1+ \varepsilon)\fobj(W)$ therefore $\fobj(W)\ge \frac{(1-\varepsilon)}{(1+\varepsilon)}\fobj(W')$ 
	hence the claim follows by combining the above with Equation~\eqref{eq:boundWprime} and the fact that Algorithm~\ref{alg:kepsilonapproxsampling} will return a vertex set achieving at least the solution attained by $W$, which proves the  bound on the approximation factor. Finally, to show that such guarantees hold with probability $>1-\eta$, let $F_i$ be the event that the probabilistic guarantees of Algorithm~\ref{alg:kepsilonapproxsampling} fail at iteration $i\in [I]$, this  occurs only if
	the following bad event occurs:
	$F_i \mathbin{\color{blue}{=}}  \text{``there exist } w\in W_i \text{ such that } \estimateTemporalDegreeWeight{W_i}{w}\notin (1\pm \varepsilon) \temporalDegreeWeight{W_i}{w}$''. 
	Recall that at each iteration $i\in [1,\bigO(\log n)]$ the sample size obtained by \texttt{GetBound} guarantees that $\mathbb{P}[F_i]\le \eta/2^i$, then the probability of the event $F_R$ that the algorithm fails under random iterations $R$ during its execution is bounded by
	\[
	\mathbb{P}\left(F_R\right) < \sum_{i=1}^{+\infty} (\mathbb{P}\left[ F_R | R=i\right]\mathbb{P}\left[I=i\right])
	\le \sum_{i=1}^{+\infty} \mathbb{P}[F_i] \le \sum_{i=1}^{+\infty} \frac{\eta}{2^{i}} \le \eta,
	\]
	combining the union bound and the geometric series, concluding the proof.
\end{proof}

\begin{figure}[t]
	\centering
	\begin{tabular}{c}
		\includegraphics[width=0.8\linewidth]{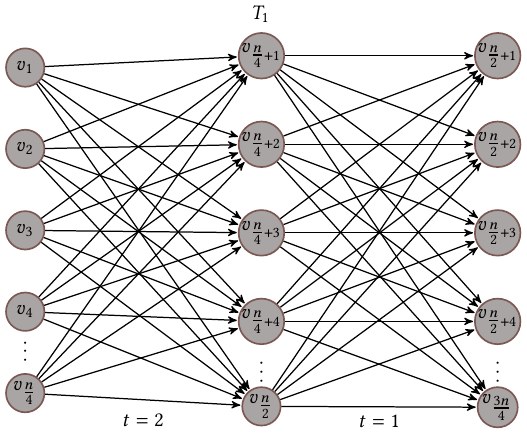}\\
		\includegraphics[width =0.8\linewidth]{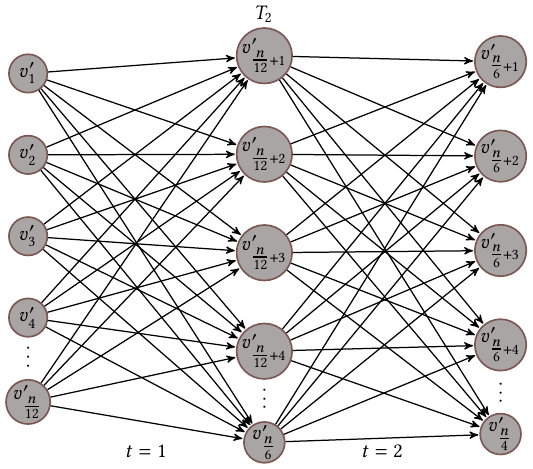}
	\end{tabular}
	\caption{Temporal network $T\mathbin{\color{blue}{=}}  (T_1=(V_1,E_1) \cup T_2=(V_2,E_2) )$ as constructed in the proof of Lemma~\ref{lemma:staticvstemporal}. The timestamps of the temporal edges are placed below each layer, for simplicity of visualization. $V_1$ is the optimal solution for S2DS but it achieves a solution that is 0 for \ourproblem problem, for which the vertices in $V_2$ achieve the optimal solution \bigO($n^2$).}
	\label{fig:lemmastaticvstemporal}
\end{figure}

\begin{proof}[Proof of Theorem~\ref{lemma:ratioHybrid} (Sketch)]
	The proof is similar to the one of Lemma~\ref{lemma:ratioSampling}. We only need to account for some sub-cases that may occur when considering the solution $W'$ obtained in Line~\ref{alghyb:callGreedyPeel} that is obtained by running \algkapp on $W_{J+1}$. As for the proof of Lemma~\ref{lemma:ratioSampling} we assume that at each iteration of the $J$ iterations by using approximate temporal-motif degrees it holds that ``$\estimateTemporalDegreeWeight{W_i}{w_j}\in (1\pm \varepsilon) \temporalDegreeWeight{W_i}{w_j}$'' for each $w_j\in W_i$ and $i=1,\dots,J$. Which also implies that $\hat{\tau}(W_j)/|W_j| \in (1\pm \varepsilon) \tau(W_j)/|W_j| $. Now let us consider the first node $v\in W^*$ that gets removed across the iterations of the algorithm, then there may be two distinct cases (I). $v\in W_i$ for some $i=1,\dots, J$; and (II) or $v\in W_{J+1}$. (I) In this case the algorithm returns a solution which attains the same approximation quality as in Lemma~\ref{lemma:ratioSampling} since it is easy to see that $W'$ (i.e., the subnetwork obtain by \suitegreedy)  cannot provide a much worse solution than the best solution observed by the algorithm during the $J$ iterations using approximate temporal motif degrees. Let $W_a$ be the solution achieving the $\tfrac{(1-\varepsilon)}{k(1+\xi)(1+\varepsilon)}$-approximation factor (recall that $v\in W^*$ and $v\in W_i$, for $i\in [J]$ so such a solution must exists). Let $W_b$ be the best estimated solution by the algorithm i.e., $W_b = \argmax\{\hat{\tau}(W_i)/|W_i|, \fobj(W')\}$ returned by \suitehybrid, then it is easy to see that $\fobj(W')$ cannot be less than $(1-\varepsilon)\fobj(W_a)$ since otherwise the algorithm will pick $\hat{\tau}(W_a)/|W_a|$ as optimal solution, therefore the worst case still remains the one as in Lemma~\ref{lemma:ratioSampling}  leading to the  $\tfrac{(1-\varepsilon)^2}{k(1+\xi)(1+\varepsilon)^2}$-approximation factor. (II) If for $v\in W^*$, it holds $v\in W_{J+1}$ then it follows that $\fobj(W')\ge 1/k \fobj(W^*)$ from Lemma~\ref{lemma:kapproxratio}, note that the algorithm may return a solution $W_j$ for some $j\in [J]$ such that $\fobj(W_j)\le \fobj(W')$ but $\hat{\tau}(W_j)/|W_j| \ge \fobj(W')$, clearly for such a solution it holds $\fobj(W_j) \ge \tfrac{\rho(W^*)}{k(1+\varepsilon)}$. 
	Therefore the approximation-ratio of the algorithm is bounded by $\min\left\{\tfrac{(1-\varepsilon)^2}{k(1+\xi)(1+\varepsilon)^2}, \tfrac{1}{k(1+\varepsilon)} \right\}$. Finally we need to take the union bound over the $J$ iterations to guarantee that the failure probability when running the randomized batch-peeling is bounded by $\eta\in(0,1)$. Note that when $J=1$ for case (I) the approximation ratio is bounded by $\tfrac{(1-\varepsilon)^2}{k(1+\xi)(1+\varepsilon)}$ since~\suitegreedy evaluates $\fobj(\cdot)$ exactly on $W'$ and not approximately, concluding therefore the claim.
\end{proof}

\begin{proof}[Proof of Lemma~\ref{lemma:staticvstemporal}]
	First fix $n$ arbitrary large and also assume $n/12\in \mathbb{N}$ to keep the notation simple. Let $T = T_1 \cup T_2$ be the union of two temporal networks with similar structure, the first one $T_1=(V_1,E_1)$ where $V_1=\{v_1,\dots,v_{3n/4}\}$ and $E_1=\{(v_i,v_j,2): i\in[1,n/4], j\in[n/4+1,n/2]\}\cup\{(v_i,v_j,1):i\in[n/4+1,n/2], j\in[n/2+1,3n/4]\}$ and the other network is $T_2 =(V_2,E_2)$ where $V_2=\{v'_{1},\dots, v'_{n/4}\}$ and $E_2=\{(v'_i,v'_j,1): i\in[1,n/12], j\in[n/12+1,n/6]\}\cup\{(v'_i,v'_j,2): i\in[n/12+1,n/6], j\in[n/6+1,n/4]\}$. See Fig.~\ref{fig:lemmastaticvstemporal} for a visualization of the resulting network $T$.
	Let $D_{T}$ be the directed static network associated to $T$, then the solutions of S2DS on $V_1$ and $V_2$ are respectively $(n/4)^3/(3n/4)$ and $(n/12)^3/(n/4)$ hence the optimal solution is achieved by $H^*=V_1$. But when we consider  \ourproblem problem on $T$ on a 2-path temporal motif for $\delta$ fixed at least to 1, $V_1$ has a solution whose value is $0$ (i.e., $\fobj(H^*)=0$), since none of its temporal paths are time-respecting (there are no instances of 2-paths in $T_1$), while $W^*=V_2$ achieves a solution $\fobj(W^*)= \bigO(n^2)$. Finally note that $H^*\cap W^* = \emptyset$.
\end{proof}

\section{Baseline algorithms}\label{appsec:baselines}
In the next sections we describe, prove the guarantees and give bounds on the time complexity of each baseline algorithm considered, we recall that a summary of such baselines is given in Table~\ref{tab:ALDENTEalgs}.

\subsection{\suiteex}\label{subsec:exact}
We first present \suiteex, an exact algorithm for the \ourproblem problem, 
which solves multiple $(s,z)$-min-cut instances 
on a flow network encoding weights of $\delta$-instances in $\mathcal{S}_T$, leveraging $\mathcal{H}$ (which we briefly mentioned in Section~\ref{sec:temporalAndBaselines}). 
This algorithm extends ideas developed for the exact solution of $k$-vertex motif and triangle (or $k$-clique) 
densest-subgraph problems in static networks~\cite{Goldberg:CSD-84-171, Tsourakakis2014, Wang2020, Tsourakakis2015, Mitzenmacher2015, Fang2019}.

A flow network $Z=(V_Z,E_Z)$ is a directed network with two special vertices $s\in V_Z$ (source) and $z\in V_Z$ (sink) and a set of intermediate vertices, such that the source has edges of the form $(s,\cdot)$ and the sink is connected by edges of $(\cdot, z)$ where ``$\cdot$'' represents an intermediate vertex. Additionally, each edge $e\in E_Z$ over the network is assigned with a weight $w(e)$, and a $(s,z)$-min-cut on such network is a partition $(\mathsf{S},\mathsf{Z})$ of the vertices $V_Z$ such that $s\in \mathsf{S}$ and $z\in \mathsf{Z}$, and the cost of the cut (defined as sum of the weights of the edges going from vertices in $\mathsf{S}$ to vertices in $\mathsf{Z}$) is minimized.

We first recall some definitions, starting from graph isomorphism:
we say that $G_1=(V_1,E_1)$ and $G_2=(V_2,E_2)$ are \emph{isomorphic} (denoted by $G_1\!\sim\!G_2$) 
if there exists a bijection $g$ on the vertices such that $e_1=(x,y)\in E_1$ 
if and only if $e_2=(g(x),g(y))\in E_2$. 

Next we recall the construction of the set $\mathcal{H}$, that is given a parameter $k\in\mathbb{N}$, we defined $\mathcal{H} = \{H :H=(V_H, E_H) \subseteq G_{T}, \text{ for some } H' \subseteq H \text{ it holds } H' \!\sim\! G[M], \tau(V_H)>0, \text{ and } |V_H|=k\}$ to be the set of 
$k$-connected induced subgraphs ($k$-CIS) from $G_{T}$, where each $k$-CIS $H$ encodes a subgraph containing at least a $\delta$-instance $S$ with $\tau(S)>0$,
and is induced in~$G_{T}$,
i.e., it contains all edges among the vertices~$V_H$. 
Note that this condition is ensured by requiring the existence of $H'\subseteq H$ such that $H'\!\sim\! G[M]$ and $\tau(V_H)>0$, where $G[M]$ is the undirected graph associated to the temporal motif $M$. 
Given a $k$-vertex $\ell$-edge temporal motif $M$ and a temporal network $T=(V_T,E_T)$, let us consider the weighted flow network $Z = (V_Z = \{s\}\cup V_\mathcal{H} \cup V_T \cup \{z\}, E_Z)$, 
with $w_{Z,\zeta}: E_Z \mapsto \mathbb{R}_0^+$,~where:
\begin{itemize}
	\item 
	each vertex $v_H\in V_\mathcal{H}$ is associated to a  $k$-CIS $H\in\mathcal{H}$;
	\item $s$ and $z$ are the source and sink vertices of the flow network;
	\item $E_Z=(E_1\cup E_2 \cup E_3)$ where $E_1=\{(s,v_H): v_H\in V_\mathcal{H}\}$, 
	$E_2=\{(v_{H}, v): v_H \in V_\mathcal{H}, v\in V_T, v\in V_H\}$, 
	$E_3=\{(v,z): v\in V_T\}$, i.e., 
	$E_1$ connects the source $s$ to all vertices in $V_\mathcal{H}$, 
	$E_2$ connects each vertex $v_H\in V_\mathcal{H}$ representing a $k$-CIS $H=(V_H,E_H)\in \mathcal{H}$ to its vertices in $V_H\subseteq V_T$, and 
	$E_3$ connects each vertex of $V_T$ to the sink $z$;
	\item the weighting function is defined as follows, given $\zeta \ge 0$:
	\[
	w_{Z,\zeta}(e) =
	\begin{cases}
		\tau(V_H)= \sum_{S\in \mathcal{S}_{H}}\tau(S) & \text{if }e=(s,v_{H})\in E_1, \\
		+\infty & \text{if } e\in E_2, \text{ and}\\
		\zeta & \text{if } e=(v, z)\in E_3.
	\end{cases}
	\]
\end{itemize}

We show how the optimal solution $W^*$ to the \ourproblem~problem is computed in Algorithm~\ref{alg:MinCutExact}. 
The algorithm first enumerates all $\delta$-instances in the temporal network $T$ (line~\ref{alglineExact:enumerateInst}) and builds the flow network $Z$ described above (line~\ref{alglineExact:buildZ}).
It then instantiates $a,b$, i.e., 
the range in which we will seek for the optimal values of the solution to Problem~\ref{probl:denseMotif}, 
$\tau_{\min}$ and $\zeta$ used for finding the optimal values through binary search 
(lines~\ref{alglineExact:initab}-\ref{alglineExact:initZeta}). 
Then the algorithm starts its main loop (line~\ref{alglineExact:mainwhile}) seeking at each iteration for the optimal solution $W^*$ by repeatedly solving the $(s,z)$-min cut on the flow network $Z$ with weights $w_{Z,\zeta}$, and updating $\zeta$ accordingly at each iteration, while maintaining the candidate optimal solution (line~\ref{alglineExact:updateSol}). 

We show in the next lemma that Algorithm~\ref{alg:MinCutExact} correctly identifies the optimal density value $\fobj^*\mathbin= \fobj(W^*)=\tau(W^*)/|W^*|$ to Problem~\ref{probl:denseMotif} therefore returning the optimal solution $W^*$.

The proof follows similar arguments as in previous work~\cite{Wang2020, Tsourakakis2015, Mitzenmacher2015}. 
Some of details are not immediately transferable from previous proofs, 
such as the connection to the solution on the temporal network $T$, 
therefore, we present in detail the proof below. 
\begin{lemma}\label{lemma:correctExact}
	Given a temporal network $T=(V_T,E_T)$, a temporal motif $M$, and $\delta>0$, let $\fobj^*$ be the optimal value for the solution of \ourproblem problem with weighting function $\tau$. Let $(\mathsf{S}, \mathsf{Z})$ be the $(s,z)$-min cut on $Z$ with weighting function $w_{Z,\zeta}, \zeta \ge 0$ at an arbitrary iteration of Algorithm~\ref{alg:MinCutExact}. If $V'=\mathsf{S}\cap V_T \neq \emptyset$ then $\zeta \le \rho^*$ else $\zeta \ge \fobj^*$.
\end{lemma}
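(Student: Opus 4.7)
\textbf{Proof plan for Lemma~\ref{lemma:correctExact}.} My plan is to write the min-cut value of $(\mathsf{S},\mathsf{Z})$ as an explicit function of $V' = \mathsf{S}\cap V_T$ and then compare it with the trivial cut $(\{s\}, V_Z\setminus\{s\})$. The outcome of that comparison will directly give the two implications in the statement, mirroring the classic reduction of Goldberg for edge-density and its motif analogues in \cite{Tsourakakis2015,Mitzenmacher2015,Fang2019}.

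\textbf{Step 1 (structural reduction).} First I would exploit the infinite capacities of the edges in $E_2$. Any cut of finite value can have $v_H\in\mathsf{S}$ only if every vertex of $V_H$ is also in $\mathsf{S}$, i.e.\ $V_H\subseteq V'$. Moreover, fixing $V'$, a min cut maximises the set of $v_H$'s placed on the source side, so it puts $v_H\in\mathsf{S}$ whenever $V_H\subseteq V'$ (this saves the term $\tau(V_H)$ from $E_1$ without paying anything on $E_2\cup E_3$). Consequently, minimising over $(\mathsf{S},\mathsf{Z})$ reduces to minimising over $V'\subseteq V_T$, and the resulting cut value is
\[
C(V')\;=\;\sum_{v_H: V_H\not\subseteq V'}\tau(V_H)\;+\;\zeta|V'|.
\]

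\textbf{Step 2 (weight identity).} Next I would show that $\sum_{v_H: V_H\subseteq V'}\tau(V_H)=\tau(V')$ and, as a special case, $\sum_{v_H\in V_\mathcal{H}}\tau(V_H)=\tau(V_T)$. The key observation is that every $\delta$-instance $S\in\mathcal{S}_{V'}$ spans exactly $k$ vertices (by the bijection in Definition~\ref{def:deltainst}), so its vertex set $W_S$ induces a unique $k$-CIS $H_S=G_T[W_S]\in\mathcal H$ with $V_{H_S}\subseteq V'$, and $S$ contributes $\tau(S)$ to $\tau(V_{H_S})$ only; conversely every $H$ with $V_H\subseteq V'$ accumulates in $\tau(V_H)$ only $\delta$-instances that lie in $\mathcal S_{V'}$. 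Summing gives the identity, and hence
\[
C(V')\;=\;\tau(V_T)\;-\;\tau(V')\;+\;\zeta|V'|.
\]

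\textbf{Step 3 (comparison with the trivial cut and case split).} The trivial cut $V'=\emptyset$ achieves value $\tau(V_T)$, so the min-cut value is $\le\tau(V_T)$. If the actual min cut satisfies $V'\neq\emptyset$, then $C(V')\le C(\emptyset)$ yields $\zeta|V'|\le\tau(V')$, i.e.\ $\zeta\le\fobj(V')\le\fobj^{*}$. If instead $V'=\emptyset$, then $C(\emptyset)\le C(V'')$ for every non-empty $V''\subseteq V_T$, which rearranges to $\zeta\ge\tau(V'')/|V''|=\fobj(V'')$ for all $V''$, and in particular $\zeta\ge\fobj(W^*)=\fobj^{*}$. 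Both implications of the lemma follow.

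\textbf{Main obstacle.} The only nontrivial step is the weight identity in Step~2: I need to be careful that each $\delta$-instance is charged to exactly one $k$-CIS, which relies on the fact that temporal motifs have a bijection from their $k$ vertices, so the induced static subgraph on those $k$ vertices is uniquely determined, and the construction of $\mathcal{H}$ encodes precisely these induced $k$-vertex subgraphs. Once this combinatorial bookkeeping is in place, the rest of the argument is a standard min-cut manipulation.
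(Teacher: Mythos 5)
Your proposal is correct and follows essentially the same route as the paper's proof: the same use of the infinite $E_2$ capacities to force $v_H\in\mathsf{S}$ exactly when $V_H\subseteq V'$, the same weight identity reducing the cut value to $\tau(V_T)-\tau(V')+\zeta|V'|$ (the paper's Equation~\eqref{eq:countHInst}), and the same comparison against the trivial cut $(\{s\},V_Z\setminus\{s\})$ to obtain both implications. Your reformulation of the minimization as being over $V'\subseteq V_T$ is a slightly cleaner packaging of the paper's two contradiction arguments, but the substance is identical.
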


\begin{algorithm}[t]
	\caption{\suiteex}\label{alg:MinCutExact}
	\KwIn{$T=(V_T,E_T), M , \delta\in \mathbb{R}^+, \tau$.}
	\KwOut{$W^*$ optimal solution to Problem~\ref{probl:denseMotif}.}
	$\mathcal{S} \gets \{S: S \text{ is } \delta\text{-instance of } M \text{ in } T\}$\label{alglineExact:enumerateInst}\;
	\lIf{$\mathcal{S} = \emptyset$}{\KwRet{$\emptyset$}}
	$Z \gets \mathtt{BuildGraph}(T,\mathcal{S})$\label{alglineExact:buildZ}\;
	$a\gets 0; b \gets \sum_{S\in \mathcal{S}} \tau(S);\tau_{\min}\gets\min_{S\in\mathcal{S}}\{\tau(S)\}$\label{alglineExact:initab}\;
	$\zeta \gets (b+a)/2$\label{alglineExact:initZeta}\;
	\While{$b-a \ge \frac{\tau_{\min}}{n(n-1)}$\label{alglineExact:mainwhile}}{
		$(\mathsf{S},\mathsf{Z}) \gets \mathtt{MinCut}(s,z,Z,w_{Z,\zeta})$\label{alglineExact:mincut}\;
		\lIf{$\mathsf{S} = \{s\}$}{$b\gets \zeta$\label{alglineExact:onlyS}}
		\Else{
			$a\gets \zeta$;
			$V' \gets \mathsf{S} \cap V_T$\label{alglineExact:updateSol}\;
		}
		$\zeta \gets (b+a)/2$\label{alglineExact:biseczeta}\;
	}
	\KwRet{$W^* \gets  V'$\label{alglineExact:returnOPT}}\;
\end{algorithm}

\begin{proof}
	First of all we note that every min cut $(\mathsf{S},\mathsf{Z})$ should be finite, since trivially the cut $(\{s\}, V_Z\setminus \{s\})$ has a value of the cut $\tau(V_T)$\footnote{Recall that for a directed and weighted flow network, the cost of the cut $(\mathsf{S},\mathsf{Z})$ is the sum of the weights on the outgoing edges from $\mathsf{S}$ to $\mathsf{Z}$.} and such value is finite. Fix a $(s,z)$-min cut $(\mathsf{S},\mathsf{Z})$ then let $V'=\mathsf{S}\cap V_T$. First of all note that if $v_{H=(V_H,E_H)} \in \mathsf{S}, v_H\in V_{\mathcal{H}}$ then $v\in V' $, for all $v \in V_H\subseteq V_T$ and therefore $\mathcal{S}_{H} = \mathcal{S}_{V_H} \subseteq\mathcal{S}_{V'}$ in $T$, since every minimum cut is finite.
	
	Let us consider a finite min-cut $(\mathsf{S}',\mathsf{Z}')$. Let $V' = \mathsf{S}' \cap V_T$ and let $G[{V'}]$ be the induced subgraph of $V'$ in $G_{T}$ and $T[V']$ the induced subnetwork of $V'$ in $T$. Let $C$ denote the cost of the cut, i.e., sum of the outgoing weighted edges from $\mathsf{S}'$ to $\mathsf{Z}'$.  Since the minimum cut can only be finite, we have the following:
	\begin{align}
		C(\mathsf{S}',\mathsf{Z}') 
		& = C(\{s\}, V_\mathcal{H}\cap \mathsf{Z}')+ C(\mathsf{S}'\cap V_T, \{t\})\nonumber\\
		& = \sum_{v_H\in V_\mathcal{H}\cap \mathsf{Z}'} \tau(V_H) + 
		\sum_{v\in \mathsf{S}'\cap V_T}  \zeta  \label{eq:zetaIneq-a} \\
		& = \sum_{v_H\in V_\mathcal{H}} \tau(V_H) - 
		\sum_{v_H\in V_\mathcal{H} \cap \mathsf{S}'} \left(\tau(V_H) - \zeta |V'|\right)\label{eq:zetaIneq-b}\\
		& = \tau(V_T) - \left(\tau(V') - \zeta |V'|\right) \label{eq:zetaIneq}
	\end{align}
	In the above derivation,
	Equation~(\ref{eq:zetaIneq-a}) follows from the definition of the cost of a cut, and the fact that if $v_H\in \mathsf{S}'$ then each vertex $v\in V_T$ in $V_H$ is also in $\mathsf{S}'$ since the min-cut is finite.
	Equation~(\ref{eq:zetaIneq-b}) follows from the definition of  $V'$ and the fact that 
	$V_\mathcal{H} = (V_\mathcal{H}\cap \mathsf{Z}') \cup (V_\mathcal{H} \cap \mathsf{S}')$. 
	To see why Equation~(\ref{eq:zetaIneq}) holds, 
	let $V_\mathcal{H}' \subseteq V_\mathcal{H}$ and we have
	\begin{equation}\label{eq:countHInst}
		\!\!\!\sum_{v_{H =(V_H,E_H)}\in V_\mathcal{H}'}\!\!\! \tau(V_H) = 
		\sum_{v_H \in V_\mathcal{H}'} \sum_{S\in \mathcal{S}_{V_H}}\tau(S) = 
		\!\!\!\sum_{S\in \mathcal{S}_{\cup_{v_H\in V_\mathcal{H}'} V_H}} \!\!\!\tau(S).
	\end{equation}
	Note that the above holds since each vertex in $V_\mathcal{H}$ represents a distinct $k$-CIS, i.e., the same vertex cannot represent two identical vertex sets in the original network $G_T$. 
	Given Equation~\eqref{eq:countHInst}, 
	we now obtain Equation~\eqref{eq:zetaIneq} by combining
	$\mathcal{S}_{\cup_{v_H\in V_\mathcal{H}} V_H} = \mathcal{S}_{V_T}$, 
	the definition of $V'$, and the fact that given $V_H\subseteq V'$ there exists $v_H\in V_\mathcal{H}$ 
	such that $v_H$ is in $V_\mathcal{H}\cap \mathsf{S}'$ since the cut is optimal, and conversely, 
	if a given vertex $v_H$ is in $V_\mathcal{H}\cap \mathsf{S}'$ then $V_H \subseteq V'$ 
	by the finiteness of the minimum cut.
	
	We are now ready to show that if $V'\neq \emptyset$ then $\tau(V') - \zeta |V'| \ge 0$.
	If $\tau(V') - \zeta |V'| < 0$, then by Equation~\eqref{eq:zetaIneq} 
	this min-cut would be larger than the cut $C(\{s\}, V_Z\setminus \{s\})$ contradicting its optimality. 
	Hence it holds that $\zeta \le \frac{\tau(V') }{|V'|} \le \fobj^*$.
	
	Conversely, if $V'= \emptyset$ then we show that $\tau(W) - \zeta |W| \le 0$ for each $W\subseteq V_T$.
	Note that this is trivial if $V_\mathcal{H}\cap \mathsf{S}' = \emptyset$, so we will assume that the set $V_\mathcal{H}\cap \mathsf{S}'$ is not empty. 
	Let for contradiction $\tau(W) - \zeta |W| > 0$ for  a certain $W\subseteq V_T$.
	Let $V_\mathcal{H}[W]$ be the set of vertices in $V_\mathcal{H}$ corresponding to $k$-CIS 
	between vertices of $W$ in $G_T$ with at least one $\delta$-instance $S\in \mathcal{S}_W$, $\tau(S) >0$, 
	for which it holds $V_\mathcal{H} \cap \mathsf{S}' = V_\mathcal{H}[W]$.
	Consider also the cut $(\mathsf{S},\mathsf{Z})= (\{s\}\cup W\cup V_\mathcal{H}[W], V_Z\setminus\{\{s\}\cup W \cup V_\mathcal{H}[W]\} )$.
	Since this cut is finite, by applying Equation~\eqref{eq:zetaIneq} we get
	\[
	C(\mathsf{S},\mathsf{Z}) = \tau(V_T) - \left(\tau(W) - \zeta |W|\right) < \tau(V_T),
	\]
	obtaining a value of the cut that is smaller than the cut $(\mathsf{S}',\mathsf{Z}')$, 
	that is $\tau(V_T)$ when $V'=\emptyset$.
	This contradicts the initial assumption that the min-cut ($\mathsf{S}',\mathsf{Z}'$) is optimal. 
	Hence, we have that $\zeta \ge \frac{\tau(W)}{|W|}$ for all $W\subseteq V_T$, 
	and hence $\zeta \ge \fobj^*$.
\end{proof}
Lemma~\ref{lemma:correctExact} proves that Algorithm~\ref{alg:MinCutExact} correctly identifies the optimal solution $W^*$ by finding the optimal density $\fobj^*$ through binary search. Next we show how to determine when to stop the search.

\begin{lemma}
	Given a temporal network $T=(V_T,E_T)$, let $\tau_{\min} = \min_{S\in\mathcal{S}}\{\tau(S)\}$. Given two distinct and non-empty vertex sets $V_1,V_2\subseteq V_T$ such that $\fobj(V_1)\neq \fobj(V_2)$, then it holds that
	\[
	\left| \frac{\tau(V_1)}{|V_1|} - \frac{\tau(V_2)}{|V_2|} \right| \ge \frac{\tau_{\min}}{n(n-1)}.
	\]
	\label{lemma:termination}
\end{lemma}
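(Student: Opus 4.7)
The plan is to reduce the claim to a pair of simple bounds: one on the denominator and one on the numerator of a fraction expressing $\rho(V_1)-\rho(V_2)$. Writing
\[
\rho(V_1)-\rho(V_2)=\frac{\tau(V_1)\,|V_2|-\tau(V_2)\,|V_1|}{|V_1|\cdot|V_2|},
\]
the bound on $|\rho(V_1)-\rho(V_2)|$ will follow by bounding the denominator from above and the non-zero numerator from below.

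First I would bound the denominator. Since $V_1,V_2\subseteq V_T$ and each is non-empty, $|V_1|,|V_2|\le n$. Moreover, since $V_1\ne V_2$, they cannot both equal $V_T$, so at least one of them satisfies $|V_i|\le n-1$. Hence $|V_1|\cdot|V_2|\le n(n-1)$.

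Next I would bound the numerator. By assumption $\rho(V_1)\ne\rho(V_2)$, which is equivalent to $\tau(V_1)|V_2|\ne\tau(V_2)|V_1|$, so the numerator is a non-zero real number. The main step is to observe that every $\tau(S)$ is an integer multiple of $\tau_{\min}$ (this is immediate for the constant weighting $\tau_c$, where $\tau_{\min}=1$ and each $\tau(V_i)$ is a non-negative integer; more generally, one normalises so that $\tau(S)/\tau_{\min}\in\mathbb{N}$ for every instance). Consequently $\tau(V_i)=\sum_{S\in\mathcal{S}_{V_i}}\tau(S)$ is an integer multiple of $\tau_{\min}$, and therefore so is $\tau(V_1)|V_2|-\tau(V_2)|V_1|$. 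A non-zero integer multiple of $\tau_{\min}$ has absolute value at least $\tau_{\min}$, giving
\[
\bigl|\tau(V_1)|V_2|-\tau(V_2)|V_1|\bigr|\ge \tau_{\min}.
\]

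Combining the two bounds yields $|\rho(V_1)-\rho(V_2)|\ge \tau_{\min}/(n(n-1))$. The only delicate point is the discreteness assumption on the weights; for weights not commensurable with $\tau_{\min}$ (e.g.\ the decaying function $\tau_d$ with irrational values) the lemma would not hold verbatim, but the binary-search termination criterion used by \suiteex\ can still be driven by the granularity of the underlying weights, which matches the common conventions in prior work on densest-subgraph algorithms.
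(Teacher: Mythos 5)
Your proof is correct and follows essentially the same route as the paper's: both reduce the claim to bounding the cross-multiplied numerator $\bigl|\tau(V_1)|V_2|-\tau(V_2)|V_1|\bigr|$ from below by $\tau_{\min}$ and the denominator $|V_1|\,|V_2|$ from above by $n(n-1)$ (the paper merely splits off the case $|V_1|=|V_2|$ separately, which your unified argument already covers). Your explicit caveat that the weights must be commensurable with $\tau_{\min}$ is well taken --- the paper's own two-line proof silently makes the same discreteness assumption when it asserts that the numerator, once non-zero, is at least $\tau_{\min}$.
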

\begin{proof}
	If $|V_1|=|V_2|$, 
	it easy to see that $|\tau(V_1)/|V_1| - \tau(V_2)/|V_2 | |\ge \tau_{\min}/n\ge \tau_{\min}/(n(n-1))$, for $n\ge2$, otherwise it holds $|\tau(V_1)/|V_1| - \tau(V_2)/|V_2| |\ge \tau_{\min}/(|V_1||V_2|)\ge \tau_{\min}/(n(n-1))$, for $n\ge2$. Therefore, the claim follows.
\end{proof}

By Lemma \ref{lemma:termination}, Algorithm~\ref{alg:MinCutExact} terminates in at most $\bigO(\log{n}+\log(\tau(V_T)/\tau_{\min}))$ steps of binary search, 
since it explores the range $[0,\zeta =\tau(V_T)]$, 
which is halved at each step.\footnote{Some optimization according to~\citet{Fang2019} can be used to reduce this range but it cannot improve the overall worst-case complexity.}
\ifextended 
In fact, let $x$ be the number of steps until termination, the claimed bound is achieved by solving for $x$ the following equation
\[
\frac{\tau(V_T)}{2^x} = \bigO\!\left(\frac{\tau_{\min}}{n^2}\right). 
\]
\fi

\vspace{1mm}
\para{Time complexity.} 
The execution time of Algorithm~\ref{alg:MinCutExact} 
is bounded by 
$\bigO(T_{\mathrm{enum}}+ T_{\mathrm{flow}} (\log(n) + \log(\tau(V)/\tau_{\min})))$, 
where $T_{\mathrm{enum}}$ is the running time of enumerating all the $\delta$-instances of a temporal motif $M$.\footnote{We assume that $\tau(S)$ can be computed in at most $\bigO(\ell)$ for each given $\delta$-instance $S$.} 
By using Mackey's algorithm~\cite{Mackey2018} it holds that 
$T_{\mathrm{enum}}  = \bigO(m\hat{m}^{\ell-1})$, where $\hat{m}$ is the maximum number of temporal edges of the temporal network $T$ in a time-interval length $\delta$. 
The cost of each min-cut solution can be computed in time $\bigO(n_Zm_Z)$, 
where $Z$ is the flow network on which the solution is found.\footnote{More advanced algorithms may require smaller running time.} 
Therefore, $n_Z = \bigO(n+|\mathcal{H}|)$ and $m_Z = 
\bigO(k|\mathcal{H}| + n)$; recall also that $|\mathcal{H}|=\bigO(n^k)$. 
Hence the cost of computing  a single min cut is $\bigO(k|\mathcal{H}|(n+|\mathcal{H}|) + n^2 )$. 
Therefore, the final complexity of the algorithm is 
$\bigO(m\hat{m}^{\ell-1} +(k|\mathcal{H}|(n+|\mathcal{H}|) + n^2) (\log(n) + \log(\tau(V_T)/\tau_{\min})))$, 
which in general is prohibitive in most applications.

\subsection{\suitegreedy}\label{subsec:greedy}



In this section we present \suitegreedy, a $1/k$-approximation algorithm for the  \ourproblem problem. In particular,  Algorithm~\ref{alg:kapprox}  is based on an extension of Charikar's algorithm for the edge densest-subgraph problem~\cite{Charikar2000}. \algkapp removes vertices one at a time according to the minimum temporal motif degree in the subset of vertices being examined. This will be done by first computing the set $\mathcal{H}$, and then maintaining such set updated over the peeling procedure, as we show next.

\vspace{1mm}
\para{Algorithm description.} Given a temporal network $T=(V_T,E_T)$ a temporal motif $M$, a value of $\delta$ and the weighting function $\tau$, let $v\in V_T$ let $\mathcal{S}_{V_T}(v)\mathbin= \{S: S \in \mathcal{S}_{V_T} \text{ and } v\in S \}$, that is the set of $\delta$-instances in the subnetwork induced by $V_T$ in which $v\in V$ participates. Therefore, the temporal motif degree (under weighting function $\tau$) of vertex $v\in V$ is $\temporalDegreeWeight{V_T}{v} = \sum_{S \in \mathcal{S}_{V_T}(v)}\tau(S)$, and we write $\temporalDegreeWeight{V_T}{v}$ when $M$ and $\delta$ are clear from the context.
We are ready to describe Algorithm~\ref{alg:kapprox}, which first enumerates all $\delta$-instances of the temporal motif $M$ in $T$ (line~\ref{alglinekapp:enumerateInst}) and then initializes two sets $W_n$ and $\mathcal{S}_n$, 
maintaining respectively the vertex set of the network and its corresponding set of 
$\delta$-instances (line \ref{alglinekapp:initSol}). 
Entering the main loop (line~\ref{alglinekapp:mainLoop}), 
at iteration $i$ the algorithm removes the vertex with the minimum temporal motif degree in the subnetwork induced by $W_i$, updating the corresponding sets of remaining vertices and $\delta$-instances (lines~\ref{alglinekapp:getMinNode}-\ref{alglinekapp:updateNodeSet}), and this is done by properly encoding the set $\mathcal{H}$ (see the time complexity analysis for more details). The algorithm returns as solution the set that maximizes the \ourproblem objective function among all $n-k$ vertex sets examined. 
The following result establishes the approximation ratio of Algorithm~\ref{alg:kapprox}.

\begin{algorithm}[t]
	\caption{\suitegreedy}\label{alg:kapprox}
	\KwIn{$T=(V_T,E_T), M , \delta\in \mathbb{R}^+, \tau$.}
	\KwOut{$W$.}
	$\mathcal{S} \gets \{S: S \text{ is } \delta\text{-instance of } M \text{ in } T\}$\label{alglinekapp:enumerateInst}\;
	\lIf{$\mathcal{S} = \emptyset$}{\KwRet{$\emptyset$}}
	$\mathcal{S}_n \gets \mathcal{S}$; $W_n \gets V$\label{alglinekapp:initSol}\;
	\For{$i\gets n,\dots,k+1$\label{alglinekapp:mainLoop}}{
		$v_{\min}\gets \argmin_{v\in W_i} \{\temporalDegreeWeight{W_i}{v}\}$\label{alglinekapp:getMinNode}\;
		$\mathcal{S}_{i-1} \gets \mathcal{S}_i \setminus \{S : v_{\min}\in S, S \in \mathcal{S}_i\}$\label{alglinekapp:updateInst}\;
		$W_{i-1}\gets W_i\setminus \{v_{\min}\}$\label{alglinekapp:updateNodeSet}\;
	}
	\KwRet{$W \gets  \argmax_{i=k,\dots,n} \{ \fobj(W_i)\}$\label{alglinekapp:retMaxSol}}\;
\end{algorithm}

\begin{lemma}\label{lemma:kapproxratio}
	Algorithm~\ref{alg:kapprox} is a $1/k$-approximation algorithm for the \ourproblem problem.
\end{lemma}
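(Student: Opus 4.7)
The plan is to adapt Charikar's classical argument for the edge-densest subgraph to the temporal motif setting, using the temporal motif degree in place of the vertex degree. Let $W^*$ be an optimal solution, so $\OPT=\fobj(W^*)=\tau(W^*)/|W^*|$. I would first focus on a single iteration, namely the iteration $i$ in which the algorithm peels the \emph{first} vertex $v\in W^*$. By construction of the peeling procedure we have $W^*\subseteq W_i$ at this iteration, and $v$ has minimum temporal motif degree in $W_i$ among all vertices of $W_i$.

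The first key step is a handshaking-style identity: since every $\delta$-instance involves exactly $k$ distinct vertices (the bijection in Def.~\ref{def:deltainst} guarantees distinctness), we have
\[
\sum_{w\in W_i} \temporalDegreeWeight{W_i}{w} \;=\; k\,\tau(W_i).
\]
Combined with the greedy rule, this yields $\temporalDegreeWeight{W_i}{v}\le k\,\tau(W_i)/|W_i| = k\,\fobj(W_i)$. The second key step is to lower bound the temporal motif degree of $v$ inside the optimal set. Using the optimality of $W^*$, removing $v$ from $W^*$ cannot increase density:
\[
\frac{\tau(W^*)-\temporalDegreeWeight{W^*}{v}}{|W^*|-1} \;\le\; \frac{\tau(W^*)}{|W^*|},
\]
which rearranges to $\temporalDegreeWeight{W^*}{v}\ge \tau(W^*)/|W^*| = \OPT$. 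Finally, since $W^*\subseteq W_i$, the temporal motif degree is monotone under vertex set inclusion, giving $\temporalDegreeWeight{W_i}{v}\ge \temporalDegreeWeight{W^*}{v}$.

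Chaining the three inequalities yields $\OPT \le \temporalDegreeWeight{W^*}{v} \le \temporalDegreeWeight{W_i}{v} \le k\,\fobj(W_i)$, hence $\fobj(W_i)\ge \OPT/k$. Because Algorithm~\ref{alg:kapprox} returns the vertex set maximizing $\fobj(\cdot)$ over all inspected $W_j$ (line~\ref{alglinekapp:retMaxSol}), the returned set $W$ satisfies $\fobj(W)\ge \fobj(W_i)\ge \OPT/k$, establishing the $1/k$-approximation ratio.

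The argument is essentially a routine adaptation of Charikar's proof, so no deep difficulty is expected. The only subtle point is justifying the identity $\sum_{w\in W_i}\temporalDegreeWeight{W_i}{w}=k\,\tau(W_i)$ under a general weighting function $\tau$: here one must note that each $\delta$-instance $S$ contributes its weight $\tau(S)$ exactly once per vertex it contains, and by Def.~\ref{def:deltainst} it contains exactly $k$ distinct vertices (not $\ell$ edge-endpoints counted with multiplicity), so the total contribution is $k\tau(S)$. Once this is pinned down, the chain of inequalities and the monotonicity of $\temporalDegreeWeight{\cdot}{v}$ under superset relations are immediate, and the proof concludes.
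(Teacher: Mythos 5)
Your proof is correct and follows essentially the same route as the paper's: both establish $\temporalDegreeWeight{W^*}{v}\ge \tau(W^*)/|W^*|$ for every $v\in W^*$ from the optimality of $W^*$ (your explicit rearrangement is just the paper's identity $\sum_{v'\in W^*\setminus\{v\}}\temporalDegreeWeight{W^*\setminus\{v\}}{v'}+k\temporalDegreeWeight{W^*}{v}=k\tau(W^*)$ in a different guise), and both then exploit monotonicity under $W^*\subseteq W_i$ together with the handshaking identity $\sum_{w\in W_i}\temporalDegreeWeight{W_i}{w}=k\tau(W_i)$ at the iteration where the first optimal vertex is peeled. Your ``min $\le$ average'' phrasing and the paper's summation of the degree lower bound over all of $W_i$ are logically equivalent, so there is nothing substantive to add.
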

\begin{proof}
	Let $W^*$ be the optimal solution for a given instance to the \ourproblem problem.\footnote{Recall that we write $v\in S$ to indicate that $v$ participates in at least one edge 
		of the sequence of edges representing $S$.}  
	It is easy to see that 
	$\tau(W^*)/|W^*| = 1/(k|W^*|) \allowbreak \sum_{v\in W^*} \temporalDegreeWeight{W^*}{v}$ 
	since each $\delta$-instance is weighted exactly $k$ times. 
	For a vertex $v\in W^*$ it is
	\begin{equation}
		\frac{\tau(W^*)}{|W^*|} = \frac{\sum_{v'\in W^*} \temporalDegreeWeight{W^*}{v'}}{k|W^*|} \ge \frac{\sum_{v'\in W^*\setminus \{v\}} \temporalDegreeWeight{W^*\setminus\{v\}}{v'}}{k(|W^*|-1)}.
	\end{equation}
	The above inequality follows from the optimality of $W^*$. 
	Combining the above with the fact that $\sum_{v'\in W^*\setminus \{v\}} \temporalDegreeWeight{W^*\setminus\{v\}}{v'} + k\temporalDegreeWeight{W^*}{v} = k\tau(W^*) $ we get that $\temporalDegreeWeight{W^*}{v}\ge \tau(W^*)/|W^*|$. 
	Let us consider the iteration before Algorithm~\ref{alg:kapprox} removes the first vertex $v\in W^*$.
	Let $W$ denote the vertex set in such an iteration.
	Then $W^*\subseteq W$ and additionally we know that given $w\in W$ it holds  
	$\temporalDegreeWeight{W}{w}\ge \temporalDegreeWeight{W}{v}\ge\temporalDegreeWeight{W^*}{v}\ge \tau(W^*)/|W^*|$. 
	Hence,
	\[
	\tau(W)= \frac{1}{k} \sum_{w\in W} \temporalDegreeWeight{W}{w} \ge \frac{|W|\tau(W^*)}{k|W^*|} \Rightarrow \frac{\tau(W)}{|W|} \ge \frac{\tau(W^*)}{k|W^*|}.
	\]
	Last, note that the algorithm will return a solution $W'$ for which it holds 
	$\frac{\tau(W')}{|W'|} \ge \frac{\tau(W)}{|W|} $, which concludes the proof.
\end{proof}

\vspace{1mm}
\para{Time complexity.} 
Note that Algorithm~\ref{alg:kapprox} requires an enumeration of \emph{all} $\delta$-instances in the temporal network $T$, which takes exponential time.
To provide an efficient implementation 
we compute for each vertex $v\in V_T$ the 
temporal motif degree $\temporalDegreeWeight{V_T}{v}$ 
and we build a Fibonacci min-heap over these counts (taking $\bigO(n\log n)$ for build up), 
with amortized update cost of $\bigO(1)$, 
which we use to retrieve the vertex to be removed at each iteration in $\bigO(\log n)$ time. 
Let us define $\mathcal{H}(v) \mathbin=  \{H=(V_H,E_H): v\in V_H, H\in \mathcal{H}\}\subseteq \mathcal{H}$. Additionally, for each vertex $v\in V_T$ we maintain the set $\mathcal{H}(v)$ of the static $k$-CIS in $G_T$, i.e., subgraphs $H_1\dots, H_{|\mathcal{H}(v)|}$ for which there exists at least one $\delta$-instance with non-zero weight between vertices in $V_{H_i}$, where $v\in V_{H_i}$ for each $i=1,\dots,|\mathcal{H}(v)|$. 

By using the Fibonacci min-heap data structure, vertices can be removed by avoiding the re-calculation of $\delta$-instances, and we only update the vertices alive and their instances, i.e., removing a vertex $v$ requires for each $k$-CIS $H_i$, for $i\in [1,|\mathcal{H}(v)|]$, to be removed from the lists of all the other $k-1$ vertices involved in each $k$-CIS, requiring $\bigO(k|\mathcal{H}(v)|\log(|\mathcal{H}_{\max}|))$ time where 
$|\mathcal{H}_{\max}| = \max_{v\in V_T}\{|\mathcal{H}(v)|\}$. 
Therefore, the final time complexity is
\begin{align*}
	\ifextended
	&\bigO\!\left(m\hat{m}^{(\ell-1)} +   n\right. \left.\log n +  \sum_{i=1}^n (\log n + k|\mathcal{H}(v)|\log(|\mathcal{H}_{\max}|))\right)=\\
	&\bigO\!\left(m\hat{m}^{(\ell-1)} + n\log n + k^2|\mathcal{H}|\log(|\mathcal{H}_{\max}|)\right).
	\else
	\bigO\!\left(m\hat{m}^{(\ell-1)} + n\log n + k^2|\mathcal{H}|\log(|\mathcal{H}_{\max}|)\right).
	\fi
\end{align*}
Note that the time complexity is significantly smaller than the one of Algorithm~\ref{alg:MinCutExact},
but may still be impractical on large datasets, 
due to the exact enumeration of $\delta$-instances, and since 
$|\mathcal{H}| = \bigO(n^k)$.

\subsection{\suitebatch}\label{subsec:batch}

\begin{algorithm}[t]
	\caption{\suitebatch}\label{alg:batchpeel}
	\KwIn{$T=(V_T,E_T), M , \delta\in \mathbb{R}^+, \tau, \xi>0$.}
	\KwOut{$W$.}
	$\mathcal{S} \gets \{S: S \text{ is } \delta\text{-instance of } M \text{ in } T\}$\label{algbatchPeel:enumInst}\;
	\lIf{$\mathcal{S} = \emptyset$}{\KwRet{$\emptyset$}}
	$W,W' \gets V$\label{algbatchPeel:initNodeSets}\;
	\While{$W' \neq \emptyset$\label{algbatchPeel:mainLoop}}{
		$L(W')\gets\{v\in W' : \temporalDegreeWeight{W'}{v} \le k(1+\xi) \tau(W')/|W'|\}$\label{algbatchPeel:smallNodes}\;
		$W' \gets W'\setminus L(W')$\label{algbatchPeel:updateNodeSet}\;
		\If{$\fobj(W') \ge \fobj(W)$\label{algbatchPeel:updateBestSol}}
		{
			$W\gets W'$\;
		}
	}
	\KwRet{$W$.}
\end{algorithm}

In this section we show how to speed up Algorithm~\ref{alg:kapprox}
for the price of a slightly weaker approximation guarantee, to obtain our last baseline. 
This technique is an important building block for both \algrand and \alghybrid. 
The main idea is similar to existing techniques  in literature~\cite{Bahmani2012, Epasto2015, Tsourakakis2014}:
instead of removing one vertex at a time, 
remove them in \emph{batches} according to their temporal motif degree, 
so that the algorithm terminates in fewer iterations with respect to \algkapp. Additionally, the algorithms leverages a similar approach to \algkapp for using the set $\mathcal{H}$, without the need of storing the temporal motif degrees in a min heap. 
The number of iterations, as well as the approximation ratio, are controlled by a parameter $\xi>0$. 


\vspace{1mm}
\para{Algorithm description.} \suitebatch is shown in detail in Algorithm~\ref{alg:batchpeel}. 
It takes as input all the parameters to the \ourproblem problem and an additional parameter $\xi>0$, and after enumerating all $\delta$-instances of $M$ in $T$ 
(line~\ref{algbatchPeel:enumInst}), 
at each round it peels the set $L(W')$ where $W'$ is the vertex set active in that iteration.
The vertices $v\in L(W')$ are such that their temporal motif degree $\temporalDegreeWeight{W'}{v}$ (in the subnetwork induced by $W'$) is small compared to a threshold controlled by $\xi$ 
(lines~\ref{algbatchPeel:smallNodes}-\ref{algbatchPeel:updateNodeSet}). 
The algorithm returns the best solution encountered, 
according to the objective function $\fobj(\cdot)$ of Problem~\ref{probl:denseMotif} 
(line~\ref{algbatchPeel:updateBestSol}). 
The following result establishes the approximation ratio of Algorithm~\ref{alg:batchpeel}.

\begin{lemma}\label{lemma:appxbatchpeel}
	Algorithm~\ref{alg:batchpeel} is a $\frac{1}{k(1+\xi)}$-approximation algorithm for the \ourproblem~problem.
\end{lemma}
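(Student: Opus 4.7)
The plan is to mirror the structure of the proof of Lemma~\ref{lemma:kapproxratio} for \suitegreedy, but replace the ``one vertex at a time'' step with the batch-peeling threshold. Two things need to be verified: (i) the algorithm actually makes progress at every iteration, i.e., the batch $L(W')$ is non-empty whenever $W' \neq \emptyset$, and (ii) the best set $W'$ encountered during the peeling attains density at least $\frac{1}{k(1+\xi)}\,\fobj(W^*)$.

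For (i), I would use the handshaking-style identity $\sum_{v\in W'}\temporalDegreeWeight{W'}{v} = k\,\tau(W')$ (every $\delta$-instance in $\mathcal{S}_{W'}$ touches exactly $k$ vertices, once each). If $L(W')$ were empty, then every $v\in W'$ would satisfy $\temporalDegreeWeight{W'}{v} > k(1+\xi)\tau(W')/|W'|$; summing over $v\in W'$ gives $k\tau(W') > k(1+\xi)\tau(W')$, i.e., $1 > 1+\xi$, contradicting $\xi>0$. The degenerate case $\tau(W')=0$ is harmless because then the threshold condition holds trivially for every vertex and the algorithm terminates in one more step.

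For (ii), consider the iteration at which the first vertex $v\in W^*$ is placed in $L(W')$, and let $W'$ be the active set at that iteration, so that $W^*\subseteq W'$. Reusing the optimality argument from the proof of Lemma~\ref{lemma:kapproxratio}, every $v\in W^*$ satisfies $\temporalDegreeWeight{W^*}{v}\ge \tau(W^*)/|W^*|$. Because weighted temporal motif degrees are monotone under supersets of the vertex set (additional $\delta$-instances can only appear when more vertices are available), we have $\temporalDegreeWeight{W'}{v}\ge \temporalDegreeWeight{W^*}{v}\ge \tau(W^*)/|W^*|$. On the other hand, by the peeling rule $\temporalDegreeWeight{W'}{v}\le k(1+\xi)\,\tau(W')/|W'|$. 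Chaining these two inequalities gives $\tau(W')/|W'| \ge \frac{1}{k(1+\xi)}\,\tau(W^*)/|W^*|$.

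Since Algorithm~\ref{alg:batchpeel} returns the densest set $W$ it has ever stored, we have $\fobj(W)\ge \fobj(W')\ge \frac{1}{k(1+\xi)}\fobj(W^*)$, yielding the claimed approximation ratio. The main obstacle is really just (i), the argument that the batch is non-empty; (ii) is essentially the Charikar-style proof already developed for \suitegreedy, carried through verbatim after invoking monotonicity of the temporal motif degree in the ambient vertex set. No delicate case analysis is needed because the batch threshold is calibrated precisely to make both steps match up.
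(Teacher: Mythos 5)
Your proposal is correct and follows essentially the same route as the paper's proof: the non-emptiness of $L(W')$ via the handshaking identity $\sum_{v\in W'}\temporalDegreeWeight{W'}{v}=k\tau(W')$ and a contradiction with $\xi>0$, then the chain $k(1+\xi)\tau(W')/|W'|\ge \temporalDegreeWeight{W'}{v}\ge \temporalDegreeWeight{W^*}{v}\ge \tau(W^*)/|W^*|$ at the first iteration a vertex of $W^*$ is peeled, reusing the degree bound from Lemma~\ref{lemma:kapproxratio}. Your explicit remark about the degenerate case $\tau(W')=0$ is a minor addition the paper omits, but it does not change the argument.
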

\begin{proof}
	First observe that at each iteration the set $L(W')$ is non-empty.
	Indeed, if we assume $\temporalDegreeWeight{W'}{v} > k(1+\xi)\tau(W')/|W'|$, 
	for each  $v \in W'$, 
	then $k\tau(W') = \sum_{v\in W'}\temporalDegreeWeight{W'}{v} > |W'|k(1+\xi)\allowbreak\tau(W')/|W'|$,
	hence $k>k(1+\xi)$, with $\xi > 0$, leading to a contradiction. 
	Let us consider the  optimal solution $W^*$, and the iteration when the first vertex $v\in W^*$ is added to $L(W')$. Note that $W^*\subseteq W'$. We know that for each $w \in L(W')$ it holds $\temporalDegreeWeight{W'}{w} \le k(1+\xi)\tau(W')/|W'|$ hence
	\[
	k(1+\xi)\frac{\tau(W')}{|W'|} \ge \temporalDegreeWeight{W'}{v} \ge  \temporalDegreeWeight{W^*}{v} \ge \frac{\tau(W^*)}{|W^*|}.
	\]
	Where the last step is obtained from the first part of the proof of Lemma~\ref{lemma:kapproxratio}. The proof concludes by the fact that Algorithm~\ref{alg:batchpeel} will output a solution $W$ for which $\fobj(W) \ge \fobj(W')$.
\end{proof}

\vspace{1mm}
\para{Time complexity.} Regarding its time complexity, we have:
\begin{lemma}\label{lemma:batchPeelIt}
	Algorithm~\ref{alg:batchpeel} performs at most $\bigO(\log_{1+\xi}(n))$ iterations.
\end{lemma}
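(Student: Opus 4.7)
The plan is to show that each iteration shrinks the active vertex set by a factor of at least $1+\xi$, so the number of rounds until $W'$ becomes empty is logarithmic in $n$ with base $1+\xi$.

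First, I would fix an arbitrary iteration of the main loop and let $W'$ denote the active set at the start of that iteration. Set $R = W' \setminus L(W')$ (the vertices that \emph{survive} the peeling). By definition of $L(W')$, every $v \in R$ satisfies $\temporalDegreeWeight{W'}{v} > k(1+\xi)\tau(W')/|W'|$. Summing over $R$ and using the fact that each $\delta$-instance contains exactly $k$ vertices, hence $\sum_{v\in W'}\temporalDegreeWeight{W'}{v} = k\tau(W')$, I would write
\[
k\tau(W') \;=\; \sum_{v\in W'} \temporalDegreeWeight{W'}{v} \;\ge\; \sum_{v \in R} \temporalDegreeWeight{W'}{v} \;>\; |R|\cdot \frac{k(1+\xi)\tau(W')}{|W'|}.
\]
Assuming $\tau(W')>0$ (otherwise the algorithm terminates anyway, as $L(W')=W'$), dividing through yields $|R| < |W'|/(1+\xi)$.

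Next, I would iterate this bound. Let $W'_i$ denote the active vertex set at the start of iteration $i$, with $W'_1 = V_T$, $|W'_1| = n$. The preceding calculation gives $|W'_{i+1}| < |W'_i|/(1+\xi)$, so by induction $|W'_i| < n/(1+\xi)^{i-1}$. The loop terminates as soon as $|W'_i| < 1$, which happens once $(1+\xi)^{i-1} > n$, i.e., after $i = \lceil \log_{1+\xi}(n) \rceil + 1$ iterations at most. Hence the total number of iterations is $\bigO(\log_{1+\xi}(n))$.

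The only subtle point is guaranteeing that $L(W')$ is non-empty at every iteration so that progress is actually made; this is already established in the proof of Lemma~\ref{lemma:appxbatchpeel} (otherwise one would derive $k > k(1+\xi)$, a contradiction), so no additional work is needed. I would cite that fact and conclude.
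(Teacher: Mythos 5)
Your proof is correct and follows essentially the same route as the paper's: you sum the degree threshold over the surviving vertices $W'\setminus L(W')$, compare against $k\tau(W') = \sum_{v\in W'}\temporalDegreeWeight{W'}{v}$ to get $|W'\setminus L(W')| < |W'|/(1+\xi)$, and iterate the geometric decrease. Your explicit handling of the $\tau(W')=0$ case and the pointer to the non-emptiness of $L(W')$ are minor additions the paper leaves implicit, but the argument is the same.
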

\begin{proof}
	We need to bound the number of vertices for which $W'$ is decreased at each iteration. 
	Observe that
	\[
	k\tau(W') \ge \!\!\!\sum_{v\in W'\setminus L(W')}\!\!\! \temporalDegreeWeight{W'}{v} \ge k(1+\xi) \frac{\tau(W')}{|W'|} (|W'| - |L(W')| ),
	\]
	and therefore,
	\[
	|L(W')| \ge \frac{\xi|W'|}{(1+\xi)} \Leftrightarrow |W'\setminus L(W')| \le \frac{|W'|}{(1+\xi)}.
	\]
	This means that the set $W'$ is decreased by a factor of at least $1/(1+\xi)$ at each iteration.
	Let $x$ denote the number of iterations until the termination condition is reached and note that $W'$ starts from $V$, hence $|W'|=n$ at the beginning of the first iteration.
	Solving for $x$ in the equation
	\[
	n \frac{1}{(1+\xi)^x} = \bigO(1),
	\]
	proves the claim.
\end{proof}
We  now provide a bound on the time complexity of Algorithm~\ref{alg:batchpeel}. 
First observe that we do not need to use a Fibonacci heap to store weighted counts of the different instances for the graph vertices, since we are peeling the vertices in batches. 
Instead, we need to keep track of the $k$-CIS (and hence the set $\mathcal{H}$, and $\mathcal{H}(v), v\in V_T$) in $G_T$ containing each vertex, 
as in Algorithm~\ref{alg:kapprox}. Therefore, the running time can be bounded with a similar analysis to~\suitegreedy by
\begin{align*}
	&\bigO\!\left(m\hat{m}^{(\ell-1)} + \frac{(1+\xi) n}{\xi} + k^2|\mathcal{H}|\log(|\mathcal{H}_{\max}|)\right).
\end{align*}

As with Algorithms~\ref{alg:MinCutExact} and~\ref{alg:kapprox} the time complexity is prohibitive in many applications, due to the exact enumeration of $\delta$-instances and the dependence on the size of the set $|\mathcal{H}|=\bigO(n^k)$ but the number of iterations performed by the algorithm are now significantly reduced from $\bigO(n)$ to $\bigO(\log_{1+\xi}(n))$.


\section{Datasets}\label{appsec:implDetandData}
All the datasets can be publicly downloaded online, 
and the descriptions of the datasets can be found 
in previous papers. We briefly report a list of URLs and the works that introduced each dataset.
\begin{itemize}
\item the Bitcoin and Reddit datasets can be downloaded from \url{https://www.cs.cornell.edu/~arb/data/}~\cite{Liu2019},
\item the EquinixChicago dataset from:~\url{https://github.com/VandinLab/PRESTO}~\cite{Sarpe2021}. 
\item The Venmo dataset from:~\url{https://anonymous.4open.science/r/Financial_motif-7E8F/}~\cite{Liu2023} . As a note, such dataset is a small subset of a biggest set of financial transactions leaked from the Venmo platform and available publicly online for research purposes.
\item All the other missing datasets can be downloaded from:~\url{https://snap.stanford.edu/temporal-motifs/data.html}~\cite{Paranjape2016}.
\end{itemize}

\section{Missing experimental results}\label{appsec:experimentalResults}

\subsection{Results on medium sized datasets}\label{appsubsec:mediumdata}
\begin{figure*}[t]
	\centering
	\captionsetup[subfigure]{labelformat=empty}
	\begin{tabular}{l}
		\subfloat{\includegraphics[width=1\textwidth]{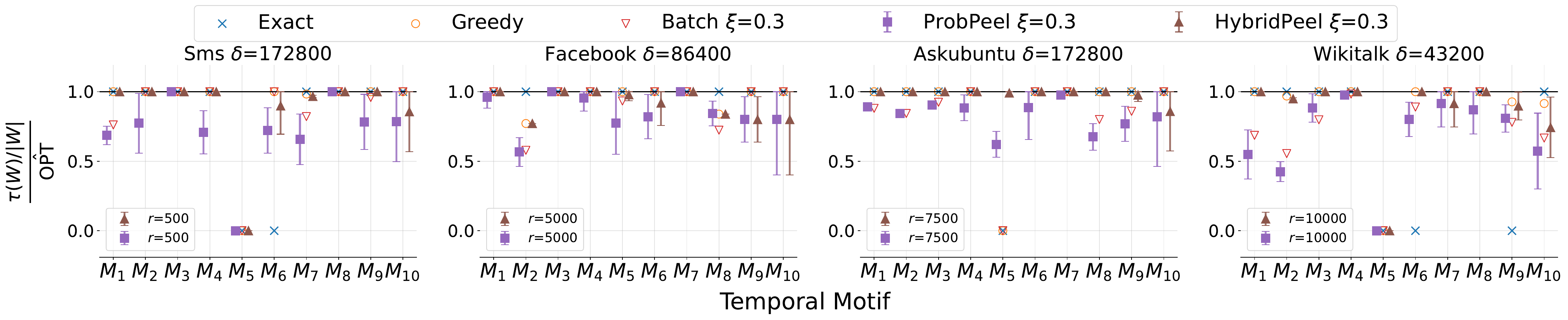}}\\[-4ex]
		\subfloat{\includegraphics[width=1\textwidth]{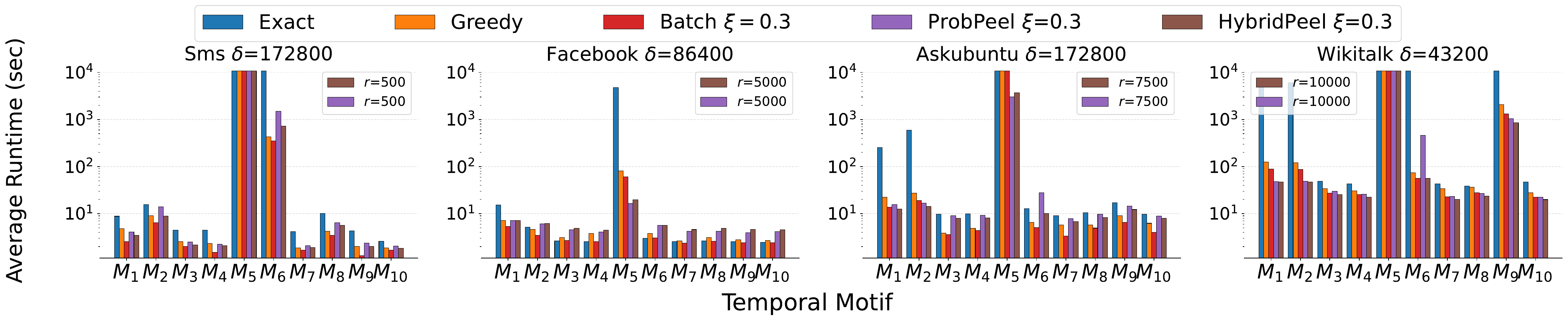}}\\[-4ex]
	\end{tabular}
	\caption{For each configuration we report (top): the quality of the solution compared to the best empirical solution (i.e., $\hat{\OPT}$), and (bottom): the running times to achieve such solutions. Running times are in \emph{log}-scale.}
	\label{fig:testApproxandRTMedium}
\end{figure*}

In this section we comment the missing results from Section~\ref{subsec:appxration}, i.e., the results relative to the solution quality and running time on the medium-size datasets from Table~\ref{tab:datasets}. The settings of such result is the same as from Section~\ref{subsec:appxration}, where we recall we set $J=2$ on medium-size datasets for \alghybrid. We observe similar trends to the large datasets we considered, in particular \algkapp generally outputs solutions with similar or better quality than \algbatch on most instances over all motifs, and both such algorithms have comparable running times. Interestingly, on such medium-size networks, for all motifs over most configurations the optimal densest temporal subnetworks can be identified by \algex, as in general this algorithm takes few tens of seconds to conclude its execution (except for specific configurations, such as the Wikitalk dataset or motif $\mathsf{M}_5$). Additionally, regarding the algorithms in \suitename, we also observe the same trends as for our large datasets (discussed in Section~\ref{subsec:appxration}), that is \algrand provides a solution with close approximation-ratio to the one provided by \algbatch, and \alghybrid often improves \algrand, again \alghybrid almost always matches the solution quality provided by~\algkapp. Interestingly, on Askubuntu, our randomized algorithms are the \emph{only} algorithms that are able to complete their execution under three hours on $\mathsf{M}_5$ where all baselines fail, which again highlights the scalability and efficiency of \suitename in hard settings, such as challenging temporal motifs.

\begin{figure}[t]
	\centering
	\begin{tabular}{c}
		\includegraphics[width=0.95\linewidth]{media/increaseeps/Askubuntu_172800xi_values}\\
		\includegraphics[width = 0.95\linewidth]{media//increaseeps/Sms_172800xi_values}
	\end{tabular}
	\caption{Approximation ratio of~\algbatch for varying $\xi$.}
	\label{fig:xiincrease}
\end{figure}

\begin{figure}[t]
	\centering
	\begin{tabular}{cc}
		\includegraphics[width=0.43\linewidth]{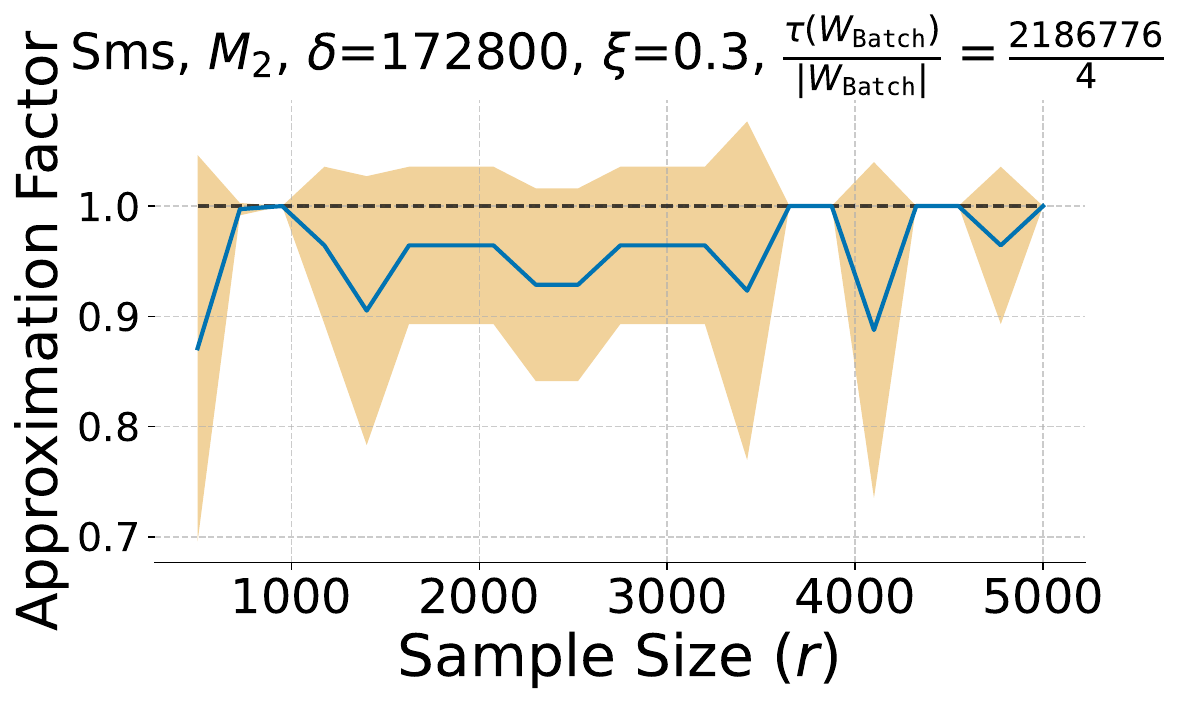} & \includegraphics[width=0.43\linewidth]{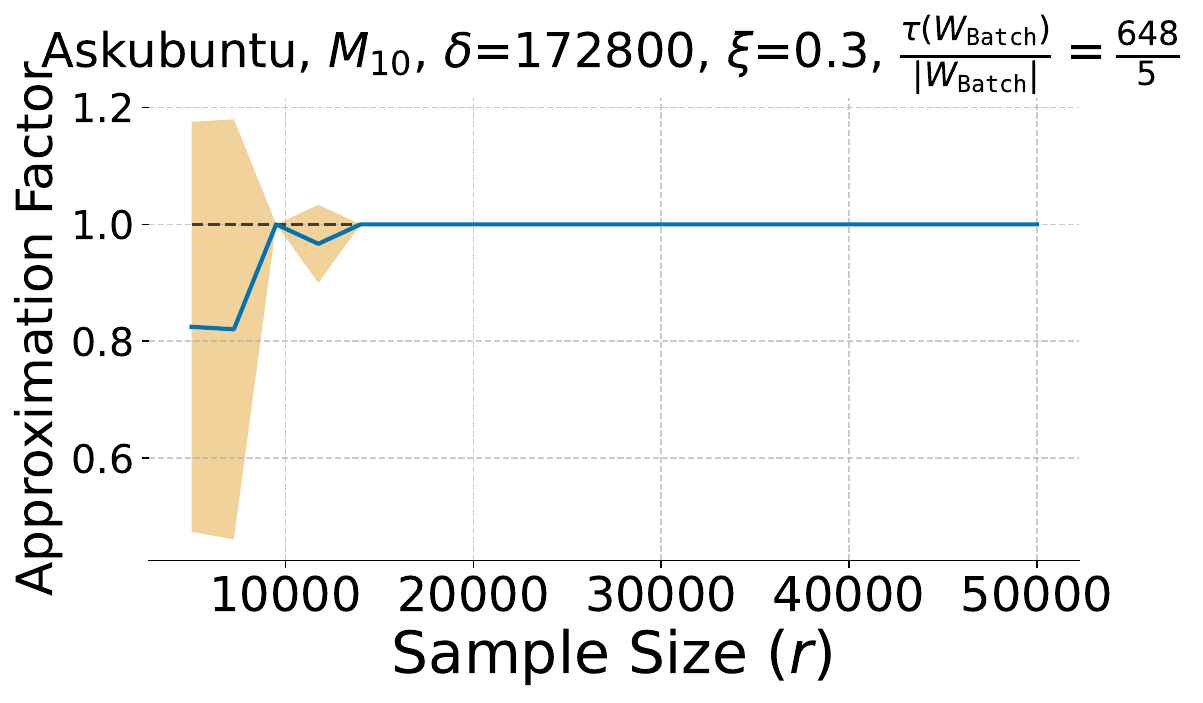}\\ \includegraphics[width=0.43\linewidth]{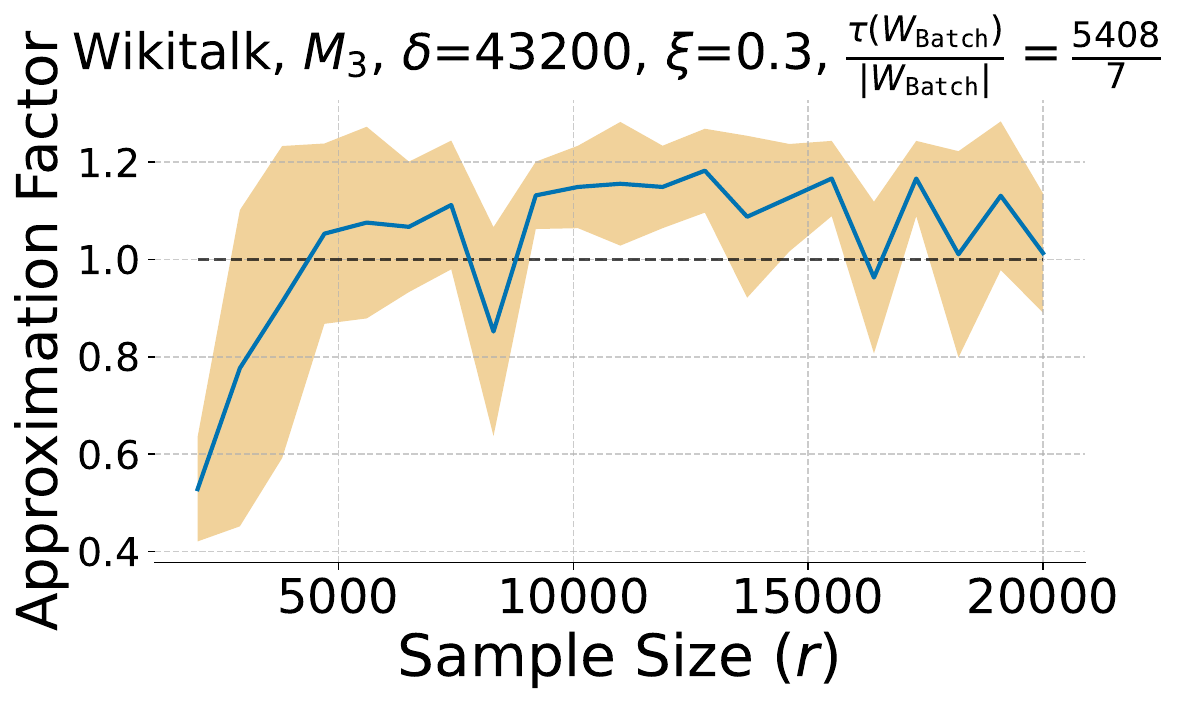} & \includegraphics[width=0.43\linewidth]{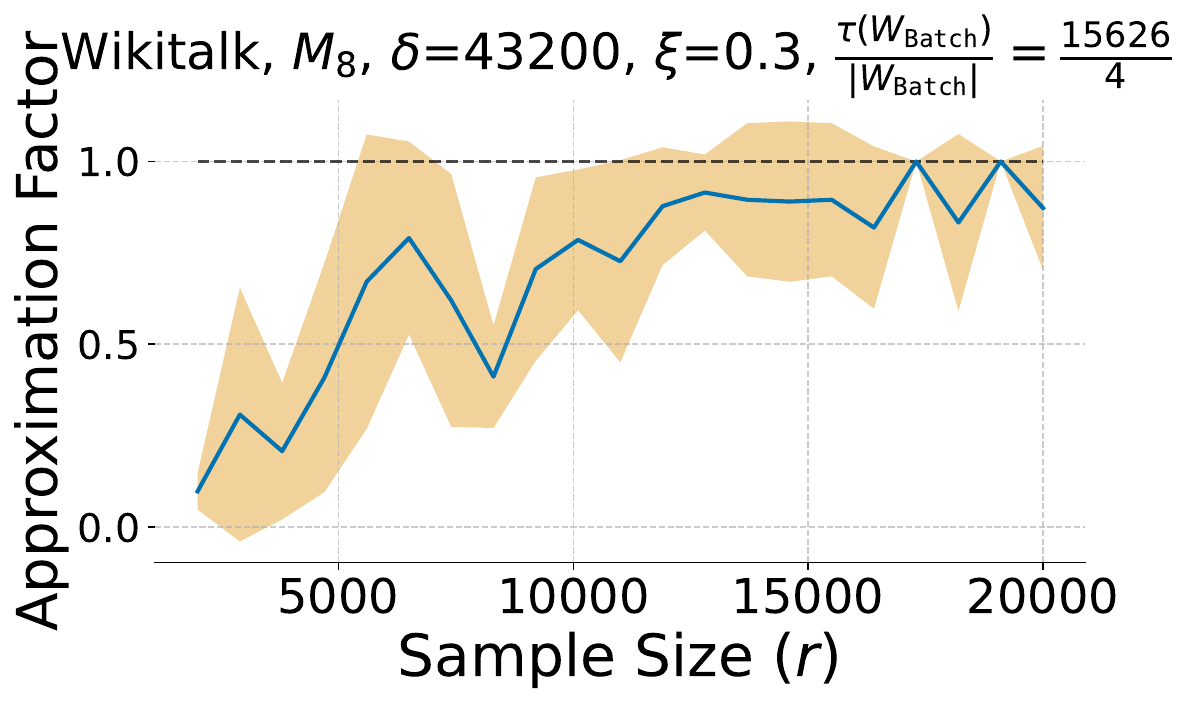}\\
	\end{tabular}
	\caption{Average approximation factor of~\algrand's solution compared to~\algbatch\ (i.e., $\fobj(W_{\mathrm{\algrand}})/\fobj(W_{\mathrm{\algbatch}})$), according to different values of the sample size $r$.}
	\label{fig:varyingR}
\end{figure}

\begin{figure*}[t]
	\centering
	\captionsetup[subfigure]{labelformat=empty}
	\begin{tabular}{ll}
		\multirow{2}{*}{\rotatebox[origin=c]{90}{Medium}} &\subfloat{\includegraphics[width=0.97\textwidth]{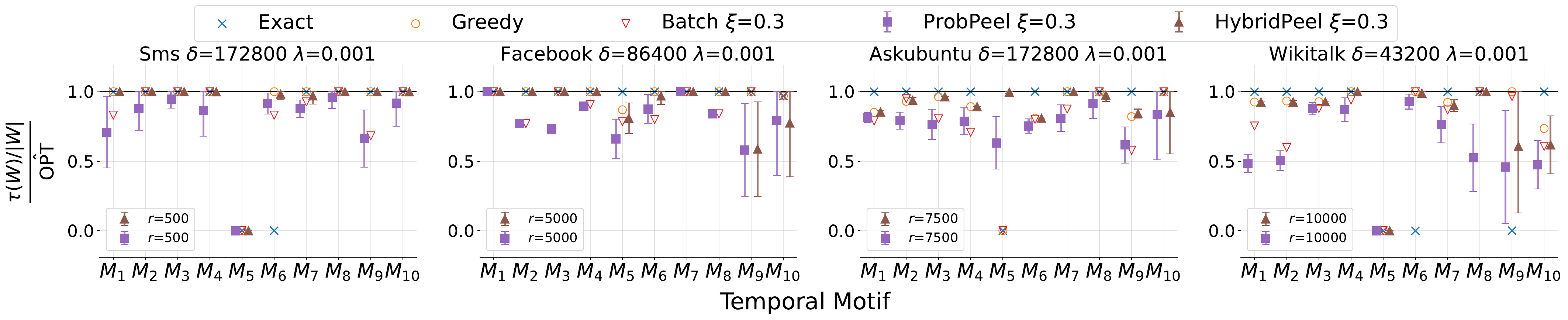}}\\[-4ex]
		&\subfloat{\includegraphics[width=0.97\textwidth]{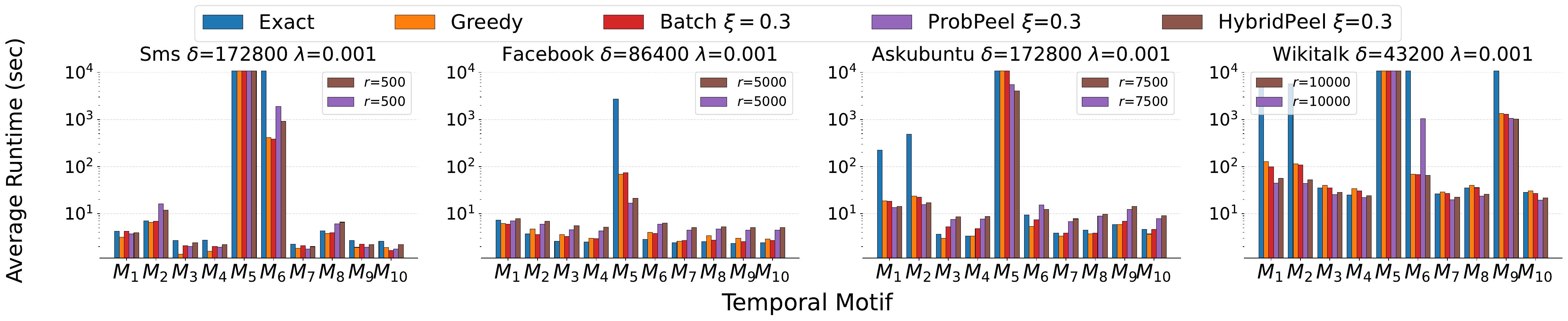}}\\ \cmidrule(l{15pt}r{15pt}){1-2}
		\multirow{2}{*}{\rotatebox[origin=c]{90}{Large}} &  \subfloat{\includegraphics[width=0.97\textwidth]{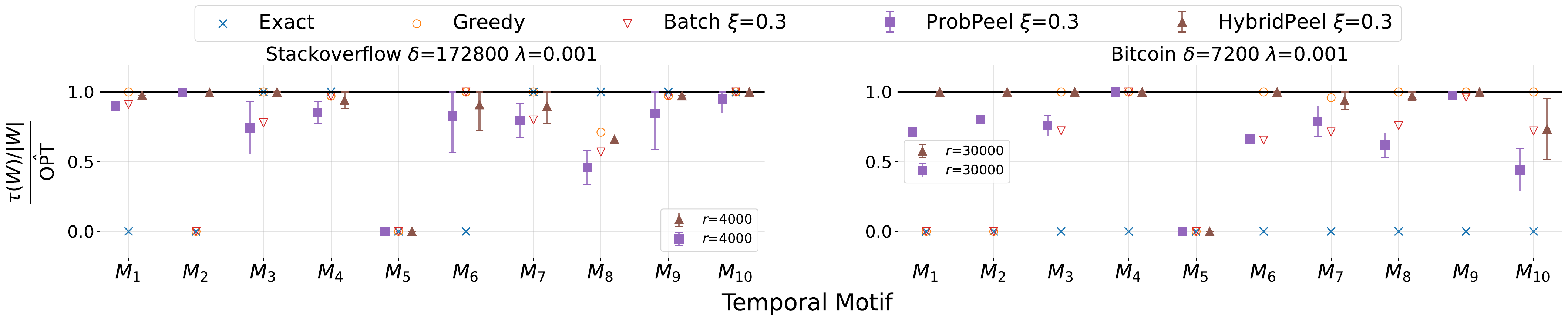}}\\[-4ex]
		&\subfloat{\includegraphics[width=0.97\textwidth]{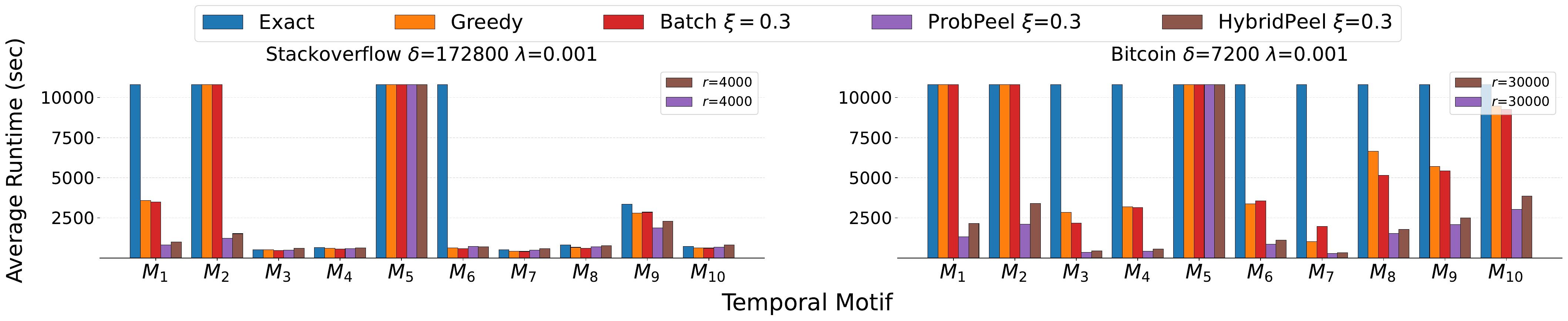}}\\[-4ex]
	\end{tabular}
	\caption{Results with exponential decaying function $\tau_d$ as from Equation~\eqref{eq:expDecay}, and $\lambda=0.001$, on medium and large datasets from Table~\ref{tab:datasets}. For each configuration we report (top): the quality of the solution compared to the best empirical solution (i.e., $\hat{\OPT}$), and (bottom): the running times to achieve such solutions.}
	\label{fig:testApproxandRTDecay}
\end{figure*}

\subsection{Sensitivity analysis}\label{subsec:sensitivity}
\subsubsection{Varying $\xi$} \label{subsubsec:varyingxi} 
We show how the solution obtained by~\algbatch\ varies according to the value of $\xi$, this setting is used to understand the best possible solution that can be obtained by \algrand according to various values of $\xi$ (i.e., when  $\varepsilon \approx 0$ then \algrand converges to the solution provided by \algbatch). We will use datasets Askubuntu and Sms considering the motifs of Fig.~\ref{fig:gridMotifs} (excluding $\mathsf{M}_5$ and $\mathsf{M}_6$ given their high running time). In particular, we start from $\xi=10^{-3}$ and increase it with a step of $5\cdot10^{-3}$ until it reaches 1
and we execute \algbatch with each different value of such parameter. We then compare each solution obtained by \algbatch to the optimal solution obtained by~\algex. The results are shown in Fig.~\ref{fig:xiincrease}. We see that in general smaller values of $\xi$ correspond to solutions with better approximation ratio obtained by \algbatch. On some instances by varying $\xi$ the solution may vary significantly (e.g., $\mathsf{M}_2$ on Askubuntu). In general the algorithm achieves satisfactory approximations on most configurations for $\xi<0.75$, given that for such range the value of the solution obtained by \algbatch is often within 80\% of the optimal solution found with~\algex. We also observe that in some settings the algorithm never outputs solutions with optimal densities (e.g., motif $\mathsf{M}_{2},\mathsf{M}_{8}$ on Askubuntu or $\mathsf{M}_1$ on Sms). This supports the design of \alghybrid, since for \algrand it may often be infeasible to converge to an optimal solution, as its guarantees are with respect to \algbatch (see Theorem~\ref{lemma:ratioSampling}).

\subsubsection{Varying $r$}
\label{subsubsec:varyingR} 
We show the convergence of the solution obtained by~\algrand\ to~\algbatch's solution by increasing the sample size~$r$. Recall that we set $r_i$ to a fixed value of $r$ for each iteration of~\algrand. For all the motifs of Fig.~\ref{fig:gridMotifs} (except $\mathsf{M}_5$) and each medium-size dataset of Table~\ref{tab:datasets} we vary the parameter $r$ as follows: we select a starting value $s$ and compute the approximation obtained by~\algrand\ on each motif, dataset and $\delta$ (as in Table~\ref{tab:datasets}) over five runs for each value of $r$ spanning the range $[s,10s]$. That is, we increase at each step $r$ starting from $s$ by $9s/30$ so that we compute the solutions over 30 different values of such parameter. We set $s$ to 500, 3000, 5000, 2000 for the medium-size datasets of Table~\ref{tab:datasets} following their order in the table. We only comment some representative behaviors (in Fig.~\ref{fig:varyingR}) illustrating possible convergence rates of~\algrand\ over different inputs (motifs and~datasets). 

We first note that, as suggested by Lemma~\ref{lemma:ratioSampling}, as the sample size~$r$ increases~\algrand\ converges to a solution value close to the one obtained by~\algbatch, and we observe such a behavior across \emph{all} the experiments we perform. Additionally, we observe different convergence behavior, such as fast convergence  ($\mathsf{M}_{10}$ on Askubuntu), convergence to a slightly better solution than with~\algbatch\ ($\mathsf{M}_3$ on Wikitalk), alternating solution, i.e., the algorithm switches between solution with similar values ($\mathsf{M}_2$ on Sms) and slow convergence, i.e., the algorithm requires a large sample to improve the solution quality ($\mathsf{M}_8$ on Wikitalk). These results support the theoretical guarantees obtained for \algrand\ and that the algorithm achieves tight convergence in most configurations. But also motivate the design of \alghybrid, combining such results with the one of Fig.~\ref{fig:xiincrease}, that is \algrand may often converge to a sub-optimal density even for small $\xi$.

\subsection{Results with exponential decaying function}\label{appsubsec:expdecay}
In this section we briefly evaluate the algorithms in \suitename against the baselines by using the exponential decaying weighting function $\tau_{d}$ that we introduced in Section~\ref{sec:prelims} and we recall that the \emph{decaying} function is defined as: 
\begin{equation}\label{eq:expDecay}
\tau_{d}: \tau_{d}(S) =\frac{1}{\ell-1} \sum_{j=2}^\ell \exp(-\lambda (t_j - t_{j-1}))
\end{equation}
where $\lambda > 0$ controls the  temporal decay over the temporal network. The setup is the same as described in Section~\ref{subsec:appxration}, where instead of using the constant function ($\tau_{c}$) that assigns a unitary weight to each $\delta$-instance in $\mathcal{S}_T$, we use the decaying weighting function with $\lambda=0.001$. As large datasets we will only consider Stackoverflow and Bitcoin since those are the ones where most algorithms (including \algkapp and \algbatch) except \suiteex were able to complete under the setting of Section~\ref{subsec:appxration} on most of configurations. The results are reported in Fig.~\ref{fig:testApproxandRTDecay}. We observe that the results follow the trends we discussed in Section~\ref{subsec:appxration}, among the baselines \algex is not able to conclude its execution on challenging configurations (i.e., datasets and motifs), and \algkapp outputs solutions with high approximation ratios (often over 0.8),  interestingly on some datasets such as Wikitalk and Askubuntu the approximation ratio achieved by \algkapp is smaller than the one achieved on the same configurations (dataset, motif and value of $\delta$) in Section~\ref{subsec:appxration}. This may be caused by the fact that with $\tau_{d}$ and $\lambda=0.001$ the vertices are peeled according to the weights of instances that are significantly smaller than those obtained with $\tau_{c}$. \algbatch achieves again satisfactory performances and trading off accuracy for efficiency, especially on large data (e.g. on Bitcoin).  Again our randomized algorithms in \suitename (\alghybrid and \algrand) become really efficient and scalable on large datasets saving significant running time on datasets Stackoverflow (e.g., more than $5\times$ faster than \algkapp and \algbatch on $\mathsf{M}_1$) and Bitcoin where we have at least a speedup of $\times2$, but more interestingly our algorithms \emph{scale} their computation on hard instances where existing baselines fail to compute a solution. Concerning the solution quality again \algrand matches the approximation quality provided by \algbatch while being more efficient and scalable on large data as captured by our analysis. And \alghybrid almost always matches \algkapp and \algex, and sometimes also improves over \algkapp (e.g., $\mathsf{M}_3$ on Askubuntu) achieving almost optimal results on most of the motifs considered, being therefore the best algorithm in terms of both runtime and approximation quality on most configurations.

Overall, the results discussed in this section confirm the findings of Section~\ref{subsec:appxration} with the small difference that, interestingly, some approximation algorithms (e.g., \algkapp) may slightly decrease their effectiveness when used with $\tau_{d}$ and some other maintain high-quality solutions in output (e.g., \alghybrid on large datasets) compared with the approximation qualities achieved under $\tau_{c}$.

\ifextended
\begin{figure}[t]
	\centering
	\subfloat{\includegraphics[width=1\linewidth]{media/MEMORY/PeakMem}}
	\caption{Peak RAM memory in GB over one execution.}
	\label{fig:memory}
\end{figure}

\begin{figure}[t]
	\centering
	\begin{tabular}{l}
		\includegraphics[width=0.9\linewidth]{media/casestudy/Hist-OPtsol} 
	\end{tabular}
	\caption{Histogram associated to the words inside messages collected on the edges of $T[W^*]$ on the Venmo dataset, for $\mathsf{M}_5$ and $\delta=172800$ (i.e, two days). See Section~\ref{subsec:casestudy} for more details, and Fig.~\ref{fig:venmoStarSols} for a representation of the optimal subnetwork.}
	\label{fig:vemoWordCloud}
\end{figure}

\subsection{Memory usage}
\label{subsubsec:memory} 
We measured the peak RAM memory over one single execution of each algorithm on the configurations of Section~\ref{subsec:appxration}. We show the results in Fig.~\ref{fig:memory} --- recall that the memory limit was set to 150\,GB on all datasets but EquinixChicago where the limit was set to 200\,GB. We do not report data for algorithms that do not finish within three hours.

Overall, the memory usage of the different algorithms strongly depends on the temporal motif considered, and in general motifs $\mathsf{M}_1,\mathsf{M}_2$ and $\mathsf{M}_5$ require much more memory that  other motifs on almost all datasets. We observe that on such motifs~\algrand and \alghybrid use much less memory compared to the baselines, saving on $\mathsf{M}_2$ more than 100\,GB on datasets Stackoverflow and Bitcoin and about 90\,GB on the Reddit dataset. This is due to the fact that, differently from all other algorithms, our randomized algorithms do not store the set $\mathcal{H}$ of $k$-CIS  (see Table~\ref{tab:ALDENTEalgs}).
\else
\fi
We finally note the randomized algorithms may use slightly more memory than the baselines on motifs requiring a small memory usage, since we store a copy of the temporal edges of $T$ to evaluate the solution of the returned vertex set on the original network, and such step can be easily be avoided (see Section~\ref{subsec:appxration}).

These results, coupled with those in Section~\ref{subsec:appxration}, show that the algorithms in \suitename, i.e., \algrand and \alghybrid, are the only practical algorithm to obtain high-quality solutions for the \ourproblem problem on massive networks, requiring a much smaller amount of memory on most of the challenging configurations.


\ifextended
\subsection{Case study}\label{appsubsec:caseStudy}
We report in Figure~\ref{fig:vemoWordCloud} the frequencies of top-20 words over the transactions associated to the optimal subnetwork from Figure~\ref{fig:venmoStarSols} (right). We observe terms related to social activities (i.e., food, pizza, etc.), and terms that may be related to gambling (e.g., bracket, yarg), we removed emojis from the histogram. 
\else
\fi

\fi

\end{document}
\endinput